\title{On the Information Capacity of Nearest Neighbor Representations}
\author{%
   \IEEEauthorblockN{\textbf{Kordag Mehmet Kilic}\IEEEauthorrefmark{1}, \textbf{Jin Sima}\IEEEauthorrefmark{2} and \textbf{Jehoshua Bruck}\IEEEauthorrefmark{1}}
   \IEEEauthorblockA{\IEEEauthorrefmark{1}%
   Electrical Engineering, California Institute of Technology, USA, \texttt{\{kkilic,bruck\}@caltech.edu}
   }
   \IEEEauthorblockA{\IEEEauthorrefmark{2}%
   Electrical and Computer Engineering, University of Illinois Urbana-Champaign, USA, \texttt{jsima@illinois.edu}
   }
 }
\newtheorem{theorem}{Theorem}
\newtheorem{corollary}{Corollary}[theorem]
\newtheorem{lemma}{Lemma}
\newtheorem{proposition}{Proposition}
\newtheorem{definition}{Definition}
\DeclareMathOperator {\diag}{diag}
\tikzset{
    ->,
    gate/.style={draw=black,fill=#1,minimum width=6mm,circle},
    square/.style={regular polygon,regular polygon sides=4},
    every pin edge/.style={draw=black},
    GateCfg/.style={
            logic gate inputs={normal,normal,normal},
            draw,
            scale=2
        }
}
\begin{document}

\maketitle
\thispagestyle{plain}
\pagestyle{plain}

\begin{abstract}
    The \textit{von Neumann Computer Architecture} has a distinction between computation and memory. In contrast, the brain has an integrated architecture where computation and memory are indistinguishable. Motivated by the architecture of the brain, we propose a model of \textit{associative computation} where memory is defined by a set of vectors in $\mathbb{R}^n$ (that we call \textit{anchors}), computation is performed by convergence from an input vector to a nearest neighbor anchor, and the output is a label associated with an anchor. Specifically, in this paper, we study the representation of Boolean functions in the associative computation model, where the inputs are binary vectors and the corresponding outputs are the labels ($0$ or $1$) of the nearest neighbor anchors. The information capacity of a Boolean function in this model is associated with two quantities: \textit{(i)} the number of anchors (called \textit{Nearest Neighbor (NN) Complexity}) and \textit{(ii)} the maximal number of bits representing entries of anchors (called \textit{Resolution}). We study symmetric Boolean functions and present constructions that have optimal NN complexity and resolution.
\end{abstract}
\section{Introduction}
Inspired by the architecture of the brain, we propose a model of \textit{associative computation} where memory is defined by a set a of vectors in $\mathbb{R}^n$ (that we call \textit{anchors}), computation is performed by convergence from an input vector to a Nearest Neighbor anchor, and the output is a label associated with an anchor. This paradigm relates to the ability of the brain to quickly classify objects and map those to the syntax of natural languages, for example, it is instantaneous and effortless, even for a child, to recognize a `dog', a `cat', or a `car'.  
 
In this paper, we study the representation of \textit{Symmetric Boolean Functions} in the associative computation model, where the inputs are binary vectors and the corresponding outputs are the labels ($0$ or $1$) of the Nearest Neighbor anchors. The information capacity of a Boolean function in this model is associated with two quantities: \textit{(i)} the number of anchors (called \textit{Nearest Neighbor (NN) Complexity} \cite{hajnal2022nearest}) and \textit{(ii)} the maximal number of bits representing entries of anchors (called \textit{Resolution}). Metaphorically speaking, we are interested in the number of ‘words’ (anchors) and the size of the ‘alphabet’ (resolution) required to represent a Boolean function. 

Let $d(a,b)$ denote the Euclidean distance between the vectors $a,b \in \mathbb{R}^n$. The Nearest Neighbor Representation and Complexity are defined in the following manner.

\begin{definition}
    The \textup{Nearest Neighbor (NN) Representation} of a Boolean function $f$ is a set of anchors consisting of the disjoint subsets $(P,N)$ of $\mathbb{R}^n$ such that for every $X \in \{0,1\}^n$ with $f(X) = 1$, there exists $p \in P$ such that for every $n \in N$, $d(X,p) < d(X,n)$, and vice versa. The \textup{size} of the NN representation is $|P \cup N|$.
\end{definition}

Namely, the size of the NN representation is the number of anchors. Naturally, we are interested in representations of minimal size.

\begin{definition}
    The \textup{Nearest Neighbor Complexity} of a Boolean function $f$ is the minimum size over all NN representations of $f$, denoted by $NN(f)$.
\end{definition}

Some $2$-input examples that can be easily verified are given below. We say $f(X) = 1$ (or $0$) and their corresponding anchors are \textit{positive} (or \textit{negative}).

\begin{figure}[h]
    \centering
    \begin{tabular}{c c|c|c|c}
        $X_1$ & $X_2$ & $\text{AND}(X)$ & $\text{OR}(X)$ & $\text{XOR}(X)$ \\
        \hline
         0 & 0 & \textcolor{red}{\textbf{0}} & \textcolor{red}{\textbf{0}} & \textcolor{red}{\textbf{0}} \\
         0 & 1 & \textcolor{red}{\textbf{0}} & \textcolor{blue}{1} & \textcolor{blue}{1}  \\
         1 & 0 & \textcolor{red}{\textbf{0}} & \textcolor{blue}{1}  & \textcolor{blue}{1}  \\ 
         1 & 1 & \textcolor{blue}{1} & \textcolor{blue}{1} & \textcolor{red}{\textbf{0}} \\
         \hline
         \multicolumn{2}{c|}{Anchors} & $\left(
    \begin{matrix}
        \textcolor{red}{\textbf{0.5}}  & \textcolor{red}{\textbf{0.5}}   \\
        \textcolor{blue}{1} & \textcolor{blue}{1}
    \end{matrix}
    \right)$ & $\left(
    \begin{matrix}
        \textcolor{red}{\textbf{0}}   & \textcolor{red}{\textbf{0}}   \\
        \textcolor{blue}{0.5} & \textcolor{blue}{0.5}
    \end{matrix}
    \right)$ & $\left(
    \begin{matrix}
        \textcolor{red}{\textbf{0}}   & \textcolor{red}{\textbf{0}}   \\
        \textcolor{blue}{0.5} & \textcolor{blue}{0.5} \\
        \textcolor{red}{\textbf{1}} & \textcolor{red}{\textbf{1}}
    \end{matrix}
    \right)$
    \end{tabular}
    \caption{The $2$-input Boolean $\text{AND}(X) = x_1 \land x_2$, $\text{OR}(X) = x_1 \lor x_2$, and $\text{XOR}(X) = x_1 \oplus x_2$ functions, their corresponding truth values, and their NN representations. The positive values and anchors are in \textcolor{blue}{blue} while the negative values and anchors are in \textcolor{red}{\textbf{red}}.}
    \label{fig:and_or_xor}
\end{figure}

The examples given in Fig. \ref{fig:and_or_xor} belong to a class of Boolean functions called \textit{Symmetric Boolean Functions}. Let $|X|$ denote the number of $1$s in a binary vector $X$.

\begin{definition}
    A Boolean function is called \textup{symmetric} $f(X) = f(\sigma(X))$ where $\sigma(.)$ can be any permutation function.  Namely, a symmetric Boolean function $f(X)$ is a function of $|X|$. 
\end{definition}

We emphasize that although Boolean functions are evaluated over binary vectors, the anchor points can have real valued entries. Nevertheless, this need not to be the case as we can consider all the $2^n$ nodes of the Boolean hypercube as anchors and assign them to $P$ or $N$ based on the values of $f$. This implies that $NN(f) \leq 2^n$ for any $n$-input Boolean function. Specifically, we define the \textit{Boolean Nearest Neighbor Complexity (BNN)} for the case where we restrict all anchors to be binary vectors. Thus, it is easy to see that $NN(f) \leq BNN(f) \leq 2^n$ \cite{hajnal2022nearest}. This upper bound on the NN complexity is essentially loose for almost all Boolean functions.

\begin{figure}[h]
    \centering
    \begin{tabular}{c c|c|c|c}
        $X_1$ & $X_2$ & $|X|$ & $\text{XOR}(X)$ & BNN Representation \\
        \hline
         0 & 0 & 0 & \textcolor{red}{\textbf{0}} & \multirow{4}{*}{$\left(
    \begin{matrix}
        \textcolor{red}{\textbf{0}}  & \textcolor{red}{\textbf{0}}   \\
        \textcolor{blue}{0} & \textcolor{blue}{1} \\
        \textcolor{blue}{1}  & \textcolor{blue}{0}   \\
        \textcolor{red}{\textbf{1}} & \textcolor{red}{\textbf{1}}
    \end{matrix}
    \right)$} \\
         0 & 1 & 1 & \textcolor{blue}{1} &  \\
         1 & 0 & 1 & \textcolor{blue}{1}  &  \\ 
         1 & 1 & 2 & \textcolor{red}{\textbf{0}} & \\
    \end{tabular}
    \caption{The $2$-input XOR functions with its optimal BNN representation. It is known that $BNN(\text{XOR}) = 4$ whereas $NN(\text{XOR}) = 3$.}
    \label{fig:xor_bnn_nn}
\end{figure}

The discrepancy between NN and BNN is illustrated for the $\text{XOR}$ function in Fig. \ref{fig:xor_bnn_nn}. The generalization of $\text{XOR}$ to $n$ variables is called the \text{PARITY} function. $\text{PARITY}(X)$ is 1 if the number of 1s in $X$ is odd, i.e., $\text{PARITY}(X) = x_1 \oplus x_2 \oplus \cdots \oplus x_n$ and we have a dramatic gap between NN and BNN.

\begin{theorem}[\cite{hajnal2022nearest}]
\label{th:parity_bounds}
Let $f$ be the \textup{PARITY} of $n$-inputs. Then,
\begin{itemize}
    \item $NN(f) = n + 1$
    \item $BNN(f) = 2^n$
\end{itemize}
\end{theorem}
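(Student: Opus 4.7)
The plan is to establish the four inequalities $NN(f) \leq n+1$, $NN(f) \geq n+1$, $BNN(f) \leq 2^n$, and $BNN(f) \geq 2^n$ in turn.

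For the upper bound $NN(f) \leq n+1$, I would exhibit the explicit diagonal construction $a_k = (k/n, \ldots, k/n)$ for $k = 0, 1, \ldots, n$, placing $a_k$ in $P$ iff $k$ is odd. For any Boolean input $X$ of weight $w$, the squared distance expands as $d(X, a_k)^2 = w(1 - k/n)^2 + (n-w)(k/n)^2 = w - 2wk/n + k^2/n$; viewed as a function of the integer $k$ this is a concave quadratic minimised at $k = w$, so the (strictly) nearest anchor is $a_w$, whose label matches $f(X) = w \bmod 2$.

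For the matching lower bound $NN(f) \geq n+1$, I plan to exploit the symmetry of PARITY to reduce the problem to one dimension. If the anchors all lay on the diagonal, say $a = (c, \ldots, c)$, then $d(X, a)^2 = w(1-c)^2 + (n-w)c^2$ depends on $X$ only through $w = |X|$, and pairwise comparisons between diagonal anchors $c < c'$ cross at $w = n(c+c')/2$, so each anchor's Voronoi cell intersects the weight axis $\{0, 1, \ldots, n\}$ in a contiguous block. Because PARITY produces the maximally alternating sequence $f(0), f(1), \ldots, f(n) = 0, 1, 0, 1, \ldots$ with $n+1$ constant runs, at least $n+1$ diagonal anchors are required to cover the weights. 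The main obstacle is justifying the restriction to diagonal anchors: for a symmetric $f$ one expects an optimal representation to be assumable on the diagonal via an averaging or convexity reduction over $S_n$, but the naive projection $a \mapsto (\mu(a), \ldots, \mu(a))$ introduces an uncontrolled variance error $|a|^2 - n\mu(a)^2$ in the distances, so a more careful reduction is required.

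For $BNN(f) \leq 2^n$ I would simply take all of $\{0, 1\}^n$ as anchors labelled by PARITY, making every input its own nearest anchor. For $BNN(f) \geq 2^n$, a short argument by contradiction suffices: suppose some $X^* \in \{0, 1\}^n$ is not an anchor and let $A$ be its matched nearest anchor at Hamming distance $d = |X^* \oplus A| \geq 1$; since $f(A) = f(X^*)$ and PARITY flips under any single bit-flip, $d$ must be even, so $d \geq 2$. Pick any coordinate $i$ where $X^*$ and $A$ disagree, and set $Y = X^* \oplus e_i$: then $|Y \oplus A| = d - 1$ but $f(Y) \neq f(A)$, so $Y$'s matched nearest anchor $B$ satisfies $B \neq A$ and, by the strict-inequality condition on integers, $|Y \oplus B| \leq d - 2$. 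The triangle inequality then yields $|X^* \oplus B| \leq 1 + (d-2) = d - 1 < d$, yet $B$ has the opposite label to $X^*$ and so must lie at Hamming distance strictly greater than $d$ from $X^*$---a contradiction. Hence every Boolean vertex is forced to be an anchor, giving $BNN(f) = 2^n$.
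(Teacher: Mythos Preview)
The theorem is cited from \cite{hajnal2022nearest} rather than proved in this paper, so there is no direct proof to compare against. That said, the paper's own Theorem~\ref{th:per_lb} covers the lower bound $NN(\text{PARITY})\geq n+1$ as the special case $T=1$, and its method is quite different from what you sketch.

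Your upper bound $NN(f)\leq n+1$ and both BNN bounds are fine. (One slip: the quadratic $k\mapsto w-2wk/n+k^2/n$ is \emph{convex}, not concave; your conclusion that it is minimised at $k=w$ is nevertheless correct.) The Hamming--triangle argument for $BNN(f)\geq 2^n$ is clean and complete.

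The genuine gap is the lower bound $NN(f)\geq n+1$. You correctly identify that projecting an arbitrary anchor set onto the diagonal via $a\mapsto(\mu(a),\dots,\mu(a))$ does not preserve the NN structure because of the variance term, and you do not supply a working substitute. The paper's route (Proposition~\ref{prop:lower_bound} and Theorem~\ref{th:per_lb}) sidesteps symmetrisation entirely: by Proposition~\ref{prop:sym_nn_equivalence} one may permute columns so that, iteratively, the anchor with the largest sum over the next block of $T$ coordinates is peeled off; Proposition~\ref{prop:lower_bound} then shows that this anchor cannot serve any input that is all-zero on that block, forcing a new anchor for each of the $I(f)$ blocks. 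For PARITY ($T=1$) this yields $n+1$ anchors directly, with no need to assume or reduce to diagonal anchors. Your ``contiguous Voronoi blocks on the weight axis'' picture is morally right for diagonal anchors, but the reduction to that case is exactly the missing step, and the paper's argument shows it is unnecessary.
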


The fundamental difference between NN and BNN is the number of bits that are used to represent the anchor entries. To quantify it, we define the $\textit{resolution}$ of an NN representation. Loosely speaking, the number of bits to represent a number is its resolution and the resolution of a representation is the maximum resolution for a single entry for all anchor points. We say that $A \in \mathbb{R}^{m\times n}$ is an anchor matrix if each row corresponds to an anchor $a_i$ for an NN representation of an $n$-input Boolean function with $m$ anchors. We remark that we can assume that $A \in \mathbb{Q}^{m \times n}$ without loss of generality.

\begin{definition}
    The \textup{resolution ($RES$) of a rational number} $a/b$ is $RES(a/b) = \lceil \max\{\log_2{|a+1|},\log_2{|b+1|}\}\rceil$ where $a,b \in \mathbb{Z}$, $b \neq 0$, and they are coprime.
    
    For a matrix $A \in \mathbb{Q}^{m\times n}$, $RES(A) = \max_{i,j} RES(a_{ij})$. The \textup{resolution of an NN representation} is $RES(A)$ where $A$ is the corresponding anchor matrix.
\end{definition}

Given the PARITY example in Theorem \ref{th:parity_bounds}, we observe that there is a trade-off between the NN complexity and the associated resolution. The $BNN(\text{PARITY})$ is exponential in $n$ with resolution $1$ while $NN(\text{PARITY})$ is linear in $n$ with resolution $\lceil\log_2{(n+1)}\rceil$. 

Our paper is inspired by the work of Hajnal, Liu and Tur{\'a}n \cite{hajnal2022nearest}. Their paper provided several NN complexity results, including the construction for PARITY (see Theorem \ref{th:parity_bounds}), however, it only proposed the suboptimal PARITY-based construction for general symmetric functions (see Proposition \ref{prop:sym_bound}). We briefly mention some additional relevant background: Nearest neighbors classification is a well-studied topic in Information Theory and Machine Learning \cite{cover1967nearest,hart1968condensed}. Optimizing the representation of a set by using NN was first discussed in the context of minimizing the size of a training set, namely, finding a minimal training set that NN represents the original training set \cite{salzberg1995best,wilfong1991nearest}. The idea to represent Boolean functions via the NN paradigm was first studied in \cite{globig1996case} where BNN was considered with distance measures that are chosen to optimize the complexity of the NN representation. This work was extended to inclusion-based similarity \cite{satoh1998analysis} and it was proved that it provides polynomial size representations for DNF and CNF formulas.

In this paper we focus on the study of the NN representations of symmetric Boolean functions. Specifically, we construct optimal size NN representations for symmetric Boolean functions. These functions are useful to prove many complexity results about Boolean functions \cite{bruck1990harmonic,haastad1995top,stockmeyer1976combinational} and more importantly, in the context of nearest neighbors, any $n$-input Boolean function can be interpreted as an $2^n$-input symmetric Boolean function \cite{siu1991depth}. Therefore, results on symmetric Boolean functions can provide insights about our understanding of Boolean function complexity. Additionally, we study the trade-offs between the NN complexity (the number of anchors in the NN representation) and the NN resolution (the maximal number of bits to represent entries of the anchors). A non-intuitive discovery is that for symmetric Boolean functions, optimality in complexity and resolution is achieved by anchors that are asymmetric in their entries!

\subsection{The NN Complexity of Symmetric Functions}

The PARITY construction mentioned in Theorem \ref{th:parity_bounds} can be extended to arbitrary symmetric Boolean functions. The idea is to assign an anchor point to each value of $|X|$. We have the following construction of an NN representation with an anchor matrix $A$ for the given symmetric Boolean function $f(X)$.

\begin{equation}
    \label{eq:example_construction}
    \begin{tabular}{c|c c}
        $|X|$ & $f(X)$ \\
        \cline{1-2}
         0 & \textcolor{red}{\textbf{0}} & $\rightarrow a_1$ \\
         1 & \textcolor{red}{\textbf{0}} & $\rightarrow a_2$\\
         2 & \textcolor{blue}{1} & $\rightarrow a_3$ \\ 
         3 & \textcolor{blue}{1}  & $\rightarrow a_4$ \\
         4 & \textcolor{blue}{1}  & $\rightarrow a_5$ \\
         5 & \textcolor{red}{\textbf{0}} & $\rightarrow a_6$ 
    \end{tabular}\hphantom{aa} A =
    \begin{bmatrix}
        \textcolor{red}{\textbf{0}}   & \textcolor{red}{\textbf{0}}  & \textcolor{red}{\textbf{0}}   & \textcolor{red}{\textbf{0}}   & \textcolor{red}{\textbf{0}}   \\
        \textcolor{red}{\textbf{0.2}} & \textcolor{red}{\textbf{0.2}} & \textcolor{red}{\textbf{0.2}} & \textcolor{red}{\textbf{0.2}} & \textcolor{red}{\textbf{0.2}} \\
        \textcolor{blue}{0.4} & \textcolor{blue}{0.4} & \textcolor{blue}{0.4} & \textcolor{blue}{0.4} & \textcolor{blue}{0.4} \\
        \textcolor{blue}{0.6} & \textcolor{blue}{0.6} & \textcolor{blue}{0.6} & \textcolor{blue}{0.6} & \textcolor{blue}{0.6} \\
        \textcolor{blue}{0.8} & \textcolor{blue}{0.8} & \textcolor{blue}{0.8} & \textcolor{blue}{0.8} & \textcolor{blue}{0.8} \\
        \textcolor{red}{\textbf{1}}  &  \textcolor{red}{\textbf{1}} &   \textcolor{red}{\textbf{1}} &   \textcolor{red}{\textbf{1}} &   \textcolor{red}{\textbf{1}}
    \end{bmatrix}
\end{equation}

\begin{proposition}[\cite{hajnal2022nearest}]
\label{prop:sym_bound}
Let $f$ be a symmetric Boolean function with $n$ input. Then, $NN(f) \leq n + 1$.
\end{proposition}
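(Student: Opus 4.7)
The plan is to turn the construction illustrated in display (\ref{eq:example_construction}) into a general proof by placing one anchor for each possible value of $|X|$ and checking that the nearest anchor to a binary input with $|X| = k$ is indeed the $k$-th anchor. Since $f$ is symmetric, $f(X)$ depends only on $|X| \in \{0, 1, \ldots, n\}$, so $n+1$ labels suffice, suggesting the bound.

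Concretely, I would define anchors $a_k = (k/n, k/n, \ldots, k/n) \in \mathbb{R}^n$ for $k = 0, 1, \ldots, n$, and place $a_k$ in $P$ if $f(k) = 1$ and in $N$ otherwise. This gives $|P \cup N| = n+1$, so it remains only to verify the NN condition.

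The main computation is a direct distance calculation: for any $X \in \{0,1\}^n$ with $|X| = k$,
\begin{equation*}
  d(X, a_j)^2 = k\bigl(1 - \tfrac{j}{n}\bigr)^2 + (n-k)\bigl(\tfrac{j}{n}\bigr)^2 = k - \tfrac{2kj}{n} + \tfrac{j^2}{n}.
\end{equation*}
Regarded as a function of $j \in \mathbb{R}$, this is a convex parabola whose unique minimizer is $j = k$; restricted to $j \in \{0, 1, \ldots, n\}$, the unique minimizer is still $j = k$. Hence $a_k$ is strictly closer to $X$ than any other $a_j$.

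The remaining step is routine: if $f(X) = 1$, then $|X| = k$ forces $f(k) = 1$, so $a_k \in P$ and $d(X, a_k) < d(X, a_j)$ for every $j \neq k$, in particular for every $a_j \in N$; the negative case is symmetric. The main (and only) obstacle is the distance expansion above, which is mechanical, so I expect no genuine difficulty — the content of the proposition is really just the observation that equally spaced anchors on the main diagonal of $[0,1]^n$ separate inputs by Hamming weight.
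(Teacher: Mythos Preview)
Your proposal is correct and follows exactly the construction the paper describes: the paper states that Proposition~\ref{prop:sym_bound} is proven by taking anchors $a_i = ((i-1)/n,\dots,(i-1)/n)$ for $i=1,\dots,n+1$ (your $a_k = (k/n,\dots,k/n)$ for $k=0,\dots,n$) and partitioning them into $P$ and $N$ according to the value of $f$ on each weight class. Your distance expansion and parabola argument fill in the verification that the paper (citing \cite{hajnal2022nearest}) leaves implicit.
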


Proposition \ref{prop:sym_bound} is proven by construction and in general, the matrix $A \in \mathbb{R}^{(n+1) \times n}$ contains the anchor $a_i = ((i-1)/n,\dots,(i-1)/n)$ where $P$ and $N$ is a partition of $\{a_1,\dots,a_{n+1}\}$. Regardless of the function itself, this construction requires $n+1$ anchors and we call it the \textit{\textup{PARITY}-based construction}. In contrast, we know that there are examples of symmetric Boolean functions that require a smaller number of anchors and Proposition \ref{prop:sym_bound} is evidently far from being optimal.

\begin{figure}[h]
    \centering
    \begin{tabular}{c|c|c}
        $|X|$ & $\text{AND}(X)$ & $\text{OR}(X)$ \\
        \hline
         0  & \textcolor{red}{\textbf{0}} & \textcolor{red}{\textbf{0}}  \\
         1 & \textcolor{red}{\textbf{0}} & \textcolor{blue}{1}  \\
         2  & \textcolor{blue}{1} & \textcolor{blue}{1} \\
         \hline
         Anchors & $\left(
    \begin{matrix}
        \textcolor{red}{\textbf{0}}  & \textcolor{red}{\textbf{0}}   \\
        \textcolor{red}{\textbf{0.5}} & \textcolor{red}{\textbf{0.5}} \\
        \textcolor{blue}{1}  & \textcolor{blue}{1}
    \end{matrix}
    \right)$ & $\left(
    \begin{matrix}
        \textcolor{red}{\textbf{0}}  & \textcolor{red}{\textbf{0}}   \\
        \textcolor{blue}{0.5} & \textcolor{blue}{0.5} \\
        \textcolor{blue}{1}  & \textcolor{blue}{1}
    \end{matrix}
    \right)$
    \end{tabular}
    \caption{The PARITY-based constructions for $2$-input AND and OR functions. Clearly, the construction given in Fig. \ref{fig:and_or_xor} has smaller size and the same resolution.}
    \label{fig:and_or_parity_based}
\end{figure}

One can notice that there is a single transition from $0$s to $1$s in $f(X)$ for both $\text{AND}$ and $\text{OR}$ when they are enumerated by $|X|$. Furthermore, the correspondence between $f(X)$ and $|X|$ is particularly useful to measure the complexity of symmetric Boolean functions in Circuit Complexity Theory \cite{muroga1959logical,minnick1961linear,siu1991power}. These observations motivate us to define the notion of an \textit{interval} for a symmetric Boolean function in order to measure the NN complexity.

\begin{definition}
Let $a \leq b \leq n$ be some non-negative integers. An \textup{interval $[a,b]$ for an $n$-input symmetric Boolean function} is defined as follows: 
\begin{enumerate}
    \item $f(X)$ is constant for $|X| \in [a,b]$.
    \item If $a \neq 0$ and $b \neq n$, $f(X_1) \neq f(X_2)$ for any $|X_1| \in [a,b]$ and
    \begin{itemize}
        \item If $a > 0$ and $b < n$, $|X_2| = a-1$ and $|X_2| = b+1$.
        \item If $a = 0$ and $b < n$, $|X_2| = b+1$.
        \item If $a > 0$ and $b = n$, $|X_2| = a-1$.
    \end{itemize}
\end{enumerate}
 The quantity $I(f)$ is the total number of intervals for a symmetric Boolean function $f$.
\end{definition}

We usually refer to an interval $[a,b]$ for an $n$-input symmetric Boolean function shortly as an \textup{interval}. The example function in Eq. \ref{eq:example_construction} and $XOR$ have $I(f)=3$ (see Fig.\ref{fig:xor_bnn_nn}) while $AND$ and $OR$ have $I(f) = 2$ (see Fig. \ref{fig:and_or_parity_based}).

It seems natural to simplify the construction in Proposition \ref{prop:sym_bound} by assigning an anchor to each interval to reduce the size of the representation to $I(f)$. We call this an \textbf{\textit{$I(f)$ interval-anchor assignment}} where there is a one-to-one map between an anchor and an interval of $f(X)$. We assign the $i^{th}$ interval $[a,b]$ to the anchor $a_i$ and enumerate the beginning and the end of the interval assigned to it by $I_{i-1} + 1$ and $I_{i}$ respectively by using the values of $|X|$. We take $I_0 = -1$ for consistency. The function below is an example where $f(X)$ is a 6-input symmetric function and $I(f) = 3$ and $(a_1,a_2,a_3)$ is an $I(f)$ interval-anchor assignment.
\begin{equation}
    \begin{tabular}{c|c c c}
        $|X|$ & $f(X)$ \\
        \cline{1-2}
         0 & \textcolor{red}{\textbf{0}} & $\rightarrow a_1$ & $I_0 + 1 = 0$\\
         1 & \textcolor{red}{\textbf{0}}  & $\rightarrow a_1$& $I_1 = 1$ \\
         2 & \textcolor{blue}{1} & $\rightarrow a_2$ & $I_1 + 1 = 2$\\ 
         3 & \textcolor{blue}{1} & $\rightarrow a_2$ \\
         4 & \textcolor{blue}{1} & $\rightarrow a_2$ & $I_2 = 4$\\
         5 & \textcolor{red}{\textbf{0}}  & $\rightarrow a_3$ & $I_2 + 1 = I_3 = 5$
    \end{tabular}
\end{equation}

It might be possible to extend the construction given in Proposition \ref{prop:sym_bound} and find an $I(f)$ interval-anchor assignment by computing and perturbing the centroids of anchors belonging to each interval. We call this idea \textit{\textup{PARITY}-based extension}. The anchors in PARITY-based extensions are symmetric, i.e., they are equal. Consider the construction given in Eq. \eqref{eq:example_construction}. One can obtain a 3-anchor NN representation by the matrix $A' \in \mathbb{R}^{3 \times 5}$ using the matrix $A$.
\begin{equation}
    \label{eq:parity_ext_example}
    \begin{tabular}{c|c c}
        $|X|$ & $f(X)$ \\
        \cline{1-2}
         0 & \textcolor{red}{\textbf{0}} & $\rightarrow a'_1$ \\
         1 & \textcolor{red}{\textbf{0}} & \hphantom{a}\\
         2 & \textcolor{blue}{1} & \hphantom{a}\\
         3 & \textcolor{blue}{1} & $\rightarrow a'_2$ \\
         4 & \textcolor{blue}{1} & \hphantom{a} \\
         5 & \textcolor{red}{\textbf{0}} & $\rightarrow a'_3$  
    \end{tabular}\hphantom{aa} A' =
    \begin{bmatrix}
        \textcolor{red}{\textbf{0.1}} & \textcolor{red}{\textbf{0.1}}  & \textcolor{red}{\textbf{0.1}}  & \textcolor{red}{\textbf{0.1}}  & \textcolor{red}{\textbf{0.1}}  \\
        \textcolor{blue}{0.6} & \textcolor{blue}{0.6} & \textcolor{blue}{0.6} & \textcolor{blue}{0.6} & \textcolor{blue}{0.6} \\
        \textcolor{red}{\textbf{1.1}}  & \textcolor{red}{\textbf{1.1}}  & \textcolor{red}{\textbf{1.1}}  & \textcolor{red}{\textbf{1.1}} & \textcolor{red}{\textbf{1.1}} 
    \end{bmatrix}
\end{equation}

Unfortunately, PARITY-based extensions cannot be applied to all symmetric Boolean functions. We give an example where there is no PARITY-based extension in Appendix \ref{apx:counterexample}. Informally, symmetric Boolean functions treat each input $x_i$ with the same importance and we show that by breaking up the symmetry in the anchor entries and taking the whole NN representation farther away from the Boolean hypercube, we can reduce the size of representation to $I(f)$. 

Is $NN(f) = I(f)$ for symmetric Boolean functions? To conclude that we need a corresponding lower bound for an arbitrary symmetric Boolean function. It is still an open problem, however, we are able to prove it for \textit{periodic} symmetric Boolean functions.

\begin{definition}
A symmetric Boolean function is called \textup{periodic} if each interval has the same length, which is denoted by $T$. $T$ is also called \textup{period}.
\end{definition}

More precisely, if $f$ is a periodic symmetric Boolean function with period $T$, then for $(k-1)T \leq |X| < kT$, $f(X) = 0$ (or $1$) for odd $k$ and $f(X) = 1$ (or $0$) for even $k$. $\text{PARITY}$ is a periodic symmetric function with period $T = 1$. Periodic symmetric Boolean functions are useful in other works as well (see the Complete Quadratic function in \cite{bruck1990harmonic} where $T = 2$).

\subsection{NN Representations of Linear Threshold Functions}
\label{sec:lt}

A linear threshold function is a weighted summation of binary inputs fed into a threshold operation, i.e., $\mathds{1}\{\sum_{i=1}^n w_ix_i \geq b\}$ where $\mathds{1}\{.\}$ is an indicator function that evaluates $1$ if the inside expression is true and $0$ otherwise. $b$ is called the \textit{threshold} and without loss of generality, $w_i \in \mathbb{Z}$. For non-constant linear threshold functions, an optimal hyperplane argument using support vectors implies that $NN(f) = 2$ and the converse also holds in that any Boolean function with $NN(f) = 2$ is a linear threshold function. These functions are interesting because they are the building blocks of neural networks. 

 We define an $n$-input \textit{symmetric linear threshold function} as a symmetric Boolean function with $I(f) = 2$ intervals, i.e., $f(X) = \mathds{1}\{|X| \geq b\}$, namely, $w_i = 1$ for all $i$. Interestingly, depending on the value of $b$, these functions can have different orders of BNN complexities.

\begin{theorem}[\cite{hajnal2022nearest}]
    \label{th:bnn_th}
    Let $f$ be an $n$-input symmetric linear threshold function with threshold $b$. Then,
    \begin{itemize}
        \item If $b = n/2$ and $n$ is odd, $BNN(f) = 2$.
        \item If $b = n/2$ and $n$ is even, $BNN(f) \leq n/2 + 2$.
        \item If $b = \lfloor n/3 \rfloor$, $BNN(f) = 2^{\Omega(n)}$.
    \end{itemize}
\end{theorem}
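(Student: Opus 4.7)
The plan is to handle the three bullets separately, since they call for quite different techniques.

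For the first bullet (odd $n$, $b = n/2$), the function is majority on an odd number of variables. I would simply take the two-anchor representation $P = \{(1,\dots,1)\}$, $N = \{(0,\dots,0)\}$. For any binary $X$, $d(X,(0,\dots,0))^2 = |X|$ and $d(X,(1,\dots,1))^2 = n - |X|$. Since $n$ is odd, $|X| \ne n/2$ for every $X \in \{0,1\}^n$, so exactly one of $|X| > n/2$ or $|X| < n/2$ holds, matching the target labeling. This gives $BNN(f) \leq 2$, and since $f$ is non-constant the matching lower bound is immediate.

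For the second bullet (even $n$, $b = n/2$), the difficulty is that on the middle layer $|X| = n/2$ the two extremal anchors $(0,\dots,0)$ and $(1,\dots,1)$ are equidistant, so the positive side must be augmented. I would keep $(0,\dots,0)$ as the sole negative anchor and on the positive side use $(1,\dots,1)$ together with $n/2$ carefully chosen binary vectors of Hamming weight $n/2+1$ (for instance, cyclic shifts of a fixed weight-$(n/2{+}1)$ pattern) picked so that every middle-layer input $X$ admits at least one such $p$ with $\langle X,p\rangle > |p|/2$; by the identity $d(X,p)^2 = |X| + |p| - 2\langle X,p\rangle$ this is exactly $d(X,p) < d(X,(0,\dots,0))$. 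One then verifies that inputs with $|X| > n/2$ still prefer $(1,\dots,1)$ over the negative anchor and that $(0,\dots,0)$ dominates on the lower half, yielding $BNN(f) \leq n/2 + 2$.

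For the third bullet, the lower bound $BNN(f) = 2^{\Omega(n)}$ at $b = \lfloor n/3\rfloor$, I would use a Voronoi-cell counting argument. Since anchors are binary, squared Euclidean distance equals Hamming distance and each Voronoi cell is an intersection of half-spaces of the form $\{X : \langle X, a-b\rangle > c\}$. I would fix the positive shell $|X| = \lfloor n/3\rfloor$, whose cardinality is $\binom{n}{\lfloor n/3\rfloor} = 2^{H(1/3)n(1+o(1))}$ with $H(1/3) \approx 0.918$, and argue that every single binary anchor $a$ can be the nearest positive anchor for at most $2^{(1-\varepsilon)n}$ of the inputs on that shell. The per-anchor capacity bound follows by noting that for $X$ to beat even one suitably chosen negative anchor near the threshold, the overlap $\langle X,a\rangle$ must concentrate tightly, pinning $X$ into a thin Hamming band whose intersection with the weight-$\lfloor n/3\rfloor$ layer has size $2^{(1-\varepsilon)n}$ by standard hypergeometric tail estimates. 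Dividing total shell size by per-anchor capacity gives $2^{\Omega(n)}$.

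The main obstacle, by far, is the last part: the Voronoi cell is shaped by \emph{all} negative anchors simultaneously, not a single reference one, so one must argue that no matter which collection of negative anchors the adversary picks, some canonical half-space constraint always forces the shell-intersection to be exponentially small. I expect the cleanest route is to associate to each positive anchor $a$ a natural adversarial negative direction (for instance a shifted copy or complement of $a$) and argue by an averaging/pigeonhole step that either that direction is itself among the chosen negative anchors or some actually chosen negative anchor geometrically dominates the same portion of the negative layer and hence imposes an equivalent constraint. The first two parts are constructive and essentially routine once the middle-layer tie-breaking gadget is in place.
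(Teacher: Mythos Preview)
The paper does not prove this theorem; it is quoted from \cite{hajnal2022nearest} as background in Section~\ref{sec:lt}, and no argument for it appears anywhere in the text or the appendices. There is thus no paper proof to compare your attempt against.

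On the merits of your plan: the first bullet is correct. The second contains a real error, not merely an omitted verification. With $(0,\dots,0)$ as the \emph{only} negative anchor and additional positive anchors $p$ of Hamming weight $n/2+1$, take any $X$ with $|X|=n/2-1$ whose support is contained in that of $p$ (such $X$ always exist). Then $d(X,p)^2=|p|-|X|=2$ while $d(X,(0,\dots,0))^2=|X|=n/2-1$, so for every even $n\ge 6$ this $X$ is at least as close to $p$ as to the negative anchor and is misclassified. No clever selection of the $n/2$ gadget anchors can repair this, because every weight-$(n/2{+}1)$ vector contains $\binom{n/2+1}{2}$ weight-$(n/2{-}1)$ subvectors. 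Your asserted check that ``$(0,\dots,0)$ dominates on the lower half'' is therefore false; a working construction needs either additional negative anchors or differently placed positive ones.

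For the third bullet you have essentially named the obstacle yourself and not overcome it: the per-anchor capacity bound requires, for each positive anchor $a$, a negative anchor that is guaranteed to be present (or to be dominated by one that is) and that forces the thin overlap band. The averaging/pigeonhole step you gesture at does not establish this, and as written the argument is a heuristic rather than a proof.
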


If all the anchors are on the Boolean hypercube, then the NN complexity can be as large as $2^{\Omega(n)}$. Nevertheless, in this work, we show that the resolution of a $2$-anchor NN representation of a linear threshold function is bounded by the resolution of the weights. This implies that there are constant resolution NN representations for all symmetric linear threshold functions. In general, for linear threshold functions, the resolution is $O(n\log{n})$ \cite{alon1997anti,haastad1994size,muroga1971threshold}.

\subsection{Contributions and Organizations}

Our key contributions are:

\begin{itemize}
    \item New constructions are presented for arbitrary $n$-input symmetric Boolean functions with $I(f)$ anchors and $O(\log{n})$ resolution.
    \item The new constructions have optimal NN complexity for all periodic symmetric Boolean functions.
    \item There is always an optimal NN representation both in size and resolution for any $n$-input symmetric linear threshold function.
    \item The NN representations of some symmetric Boolean functions require the resolution to be $\Omega(\log{n})$ so that the new constructions is optimal in resolution.
\end{itemize}

The organization of the paper is as follows. In Section \ref{sec:upper_bound}, we present an explicit construction and thus, an upper bound for the NN complexity of any symmetric Boolean function. In Section \ref{sec:lower_bound}, we prove a lower bound for periodic symmetric Boolean functions and conclude that $NN(f) = I(f)$ for this class of functions. Later, in Section \ref{sec:resolution}, we study the resolution of our NN constructions, prove that it is $O(\log{n})$ and that it is optimal for some symmetric Boolean functions. In addition, we present optimal NN representation both in size and resolution for symmetric linear threshold functions. The appendices include the details of the proofs.

\section{The Upper Bound on the NN Complexity of Symmetric Boolean Functions}
\label{sec:upper_bound}

In this section, we present an explicit construction for the NN representation of symmetric Boolean functions with $I(f)$ intervals.

\begin{theorem}
\label{th:upper}
For an $n$-input symmetric Boolean function $f$, $NN(f) \leq I(f)$. 

Moreover, a matrix $B = \mathds{1} + \epsilon M$ where $B \in \mathbb{R}^{(I(f)-1) \times n}$ can always be used for a construction given that $\epsilon > 0$ is sufficiently small where $\mathds{1}$ is an all-one matrix and $M \in \mathbb{R}^{(I(f)-1)\times n}$ is a full row rank matrix.
\end{theorem}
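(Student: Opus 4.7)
The plan is to construct $I(f)$ anchors clustered near the all-ones vector $\mathds{1}$, with one anchor placed at $\mathds{1}$ itself and the remaining $I(f)-1$ anchors given by the rows of $B = \mathds{1} + \epsilon M$, assigning one anchor per interval of $f$. The hope is that the leading-order distance from any hypercube point $X$ to any of these anchors is the same, so that nearest-neighbor comparisons are governed by the small perturbations $\epsilon M$ and can be driven by a careful choice of a full row rank matrix $M$.

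First I would write down the squared-distance expansion. Letting $y = \mathds{1} - X \in \{0,1\}^n$, so that $|y| = n - |X|$, one has
\begin{equation*}
\|X - \mathds{1}\|^2 = |y|, \qquad \|X - (\mathds{1} + \epsilon m_k)\|^2 = |y| + 2\epsilon \langle y, m_k\rangle + \epsilon^2 \|m_k\|^2,
\end{equation*}
so the leading $|y|$ term, which is the only part depending on $|X|$, is common to every anchor, and all pairwise distance differences are $O(\epsilon)$ linear functionals of $y$ with $O(\epsilon^2)$ corrections. Anchor $a_k$ is the strict nearest anchor of $X$ iff every pairwise comparison against $a_k$ is strictly negative; for $\epsilon$ small enough, the sign of each comparison is determined by its lowest nonzero term in $\epsilon$, producing strict first-order inequalities of the form $\langle y, m_k\rangle < \langle y, m_l\rangle$ (with the base $\mathds{1}$ corresponding to the zero functional), tiebroken at second order by $\|m_k\|^2$ whenever the first-order term vanishes.

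The crux is then to exhibit a full row rank $M \in \mathbb{R}^{(I(f)-1) \times n}$ realizing these inequalities: for each interval, every $y$ whose weight comes from that interval must route to the corresponding anchor. Because each linear functional $y \mapsto \langle y, m_k\rangle$ depends on more than just $|y|$, naive symmetric choices ($m_k$ proportional to $\mathds{1}$) fail as soon as $I(f) \geq 3$; this is exactly the obstruction illustrated by the paper's counterexample to PARITY-based extensions. The full row rank condition supplies the $I(f)-1$ independent perturbation directions needed to realize the $I(f)$-way partition of the hypercube layers. An explicit $M$ can be produced by processing the intervals in order of $|X|$ and designing each row so that $\langle y, m_k\rangle$ has the correct extremal behavior across consecutive interval boundaries, for instance via a telescoping or Vandermonde-type structure that forces both full rank and the right monotone behavior as $|y|$ crosses each threshold.

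Finally, once the finitely many strict linear (and second-order tiebreaking) inequalities have been verified for a particular $M$, there exists an $\epsilon_0 > 0$ such that the construction realizes the nearest-neighbor representation for every $\epsilon \in (0, \epsilon_0]$. I expect the hard part to be the combinatorial step of designing $M$: one must verify that every hypercube point $y$ with $|y|$ in a given interval, across all $\binom{n}{|y|}$ ways of choosing its nonzero coordinates, is routed to the correct anchor. This is where the tension between the symmetric nature of $f$ and the asymmetric entries of the $m_k$'s must be resolved.
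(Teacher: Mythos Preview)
Your proposal has a genuine gap that stems from a misreading of the role of $B$. In the paper, the rows of $B$ are not anchors: they are the \emph{consecutive differences} $b_{ij} = a_{(i+1)j} - a_{ij}$, so the anchors are the cumulative sums $a_i = a_1 + \sum_{k<i} b_k$, with $a_1$ chosen later. Because each row of $B = \mathds{1} + \epsilon M$ has entries close to $1$, consecutive anchors are separated by roughly $\mathds{1}$; the anchors are spread out along the main diagonal, not clustered. This matters because the leading term in any pairwise distance comparison is then proportional to $|X|$, which is exactly what makes an interval-anchor assignment possible. The $\epsilon M$ perturbation and the full-rank condition enter only to guarantee that the linear system $B a_1 = c$ for the free anchor $a_1$ is solvable.

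Your clustered construction, with all anchors within $O(\epsilon)$ of $\mathds{1}$, cannot realize an interval-anchor assignment in general. Already for $2$-input XOR it fails: with anchors $\mathds{1}$, $\mathds{1}+\epsilon m_1$, $\mathds{1}+\epsilon m_2$, the point $X=(1,1)$ is at distance $0$ from $\mathds{1}$, forcing $\mathds{1}$ to be the anchor for the interval $\{|X|=2\}$. Writing $m_1=(a,b)$ for the $\{|X|=0\}$ anchor and $m_2=(c,d)$ for the $\{|X|=1\}$ anchor, routing $(0,0)$ to anchor~$1$ requires $a+b < c+d$, while routing both $(1,0)$ and $(0,1)$ to anchor~$2$ requires $c<a$ and $d<b$; adding the latter two gives $c+d < a+b$, a contradiction. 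The underlying reason is precisely the one you identified but did not resolve: once the common $|y|$ term cancels, nearest-neighbor comparisons are governed by $\langle y, m_k\rangle$, which depends on all of $y$ and not just on $|y|$, so no choice of $M$ can force every $X$ with $|X|$ in a fixed interval to the same anchor. The ``telescoping or Vandermonde-type'' step is therefore not a detail to be filled in; it is asking for something that does not exist.
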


For arbitrary symmetric Boolean functions, we first derive necessary and sufficient conditions for an $I(f)$ interval-anchor assignment.

\begin{lemma}
\label{lem:ns_cond}

Let $f$ be an $n$-input symmetric Boolean function and $A \in \mathbb{R}^{I(f)\times n}$ be an anchor matrix for an $I(f)$ interval-anchor assignment. Then, the following condition is necessary and sufficient for any $i \in \{2,\dots,I(f)\}$ and $k < i$:
\begin{align}
    \label{eq:ns_condition}
    \max_{X \in \{0,1\}^n : |X| = I_{k}} &\sum_{j=1}^n (a_{ij} - a_{kj})x_j  < \frac{1}{2}\sum_{j=1}^n (a_{ij}^2 - a_{kj}^2) \nonumber \\
     &< \min_{X \in \{0,1\}^n : |X| = I_{i-1}+1} \sum_{j=1}^n (a_{ij} -a_{kj})x_j 
\end{align}

\end{lemma}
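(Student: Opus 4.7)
The plan is to translate the NN-representation condition for the $I(f)$ interval-anchor assignment into a linear inequality in $X$ and then pin down when the inequality holds across each interval boundary. First I would expand
\[d(X, a_k)^2 - d(X, a_i)^2 = 2\sum_{j=1}^n (a_{ij} - a_{kj})\, x_j - \sum_{j=1}^n (a_{ij}^2 - a_{kj}^2),\]
using $x_j^2 = x_j$ on $\{0,1\}^n$ to cancel the $\|X\|^2$ term. Thus $d(X, a_i) < d(X, a_k)$ is equivalent to the linear-in-$X$ inequality $L_{ik}(X) := \sum_{j=1}^n (a_{ij} - a_{kj})\, x_j > T_{ik} := \tfrac{1}{2}\sum_{j=1}^n (a_{ij}^2 - a_{kj}^2)$, and $d(X, a_k) < d(X, a_i)$ just flips the inequality. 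This is the standard perpendicular-bisector rewriting and puts both sides of the desired comparison into the same scalar functional.

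Next, the $I(f)$ interval-anchor assignment demands that for each pair $(i, k)$ with $k < i$, every $X$ with $|X| \in [I_{i-1}+1,\, I_i]$ is closer to $a_i$ than to $a_k$ (so $L_{ik}(X) > T_{ik}$), and every $X$ with $|X| \in [I_{k-1}+1,\, I_k]$ is closer to $a_k$ than to $a_i$ (so $L_{ik}(X) < T_{ik}$). Because $L_{ik}$ is linear in $X$, the extremum on a layer $|X| = w$ is attained by placing the $w$ ones of $X$ at the coordinates where $a_{ij} - a_{kj}$ is smallest (for the minimum) or largest (for the maximum). The bottleneck constraints for separating intervals $k$ and $i$ are therefore attained on the innermost boundary layers, namely $|X| = I_{i-1}+1$ for the ``$> T_{ik}$'' side and $|X| = I_k$ for the ``$< T_{ik}$'' side, since these are the $|X|$ values of the two intervals that are closest to each other. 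Necessity of the lemma's two-sided condition then drops out immediately by specializing the pointwise requirement to these boundary layers.

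For sufficiency, I would argue that once the boundary conditions hold for every pair, the correct inequality propagates throughout the interior of each interval. Concretely, for an $X$ with $|X|$ strictly inside interval $i$, the pair-wise requirement $L_{ik}(X) > T_{ik}$ will be re-derived by chaining the boundary conditions of intermediate pairs $(i, i-1), (i-1, i-2), \dots, (k+1, k)$, each of which controls $L$ across a single interval boundary; a symmetric chaining covers points strictly inside interval $k$. The main obstacle will be to justify that this chaining is tight, i.e., that no interior $X$ in interval $i$ or $k$ can violate the bound while all boundary conditions are satisfied. I expect to resolve this by a short monotonicity argument comparing $L_{ik}$ on adjacent layers $|X|=w$ and $|X|=w+1$ (which differ by a single entry $a_{ij'} - a_{kj'}$ of $a_i - a_k$) and using the one-to-one structure of the interval-anchor assignment to rule out interior violations.
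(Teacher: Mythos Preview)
Your expansion of the distance condition and the necessity direction (specializing the pointwise requirement to the boundary layers $|X|=I_k$ and $|X|=I_{i-1}+1$) are correct and match the paper. The gap is in sufficiency, and it is precisely the step you flag as ``the main obstacle.''

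The reason the bottleneck sits at the innermost layers is \emph{not} that ``these are the $|X|$ values of the two intervals that are closest to each other''; it is that every coefficient $a_{ij}-a_{kj}$ is strictly positive. This is Proposition~\ref{prop:nec} in the paper, proved separately by comparing two binary vectors that differ in a single coordinate across an interval boundary. Once all coefficients are positive, $\min_{|X|=w}L_{ik}(X)$ and $\max_{|X|=w}L_{ik}(X)$ are both monotone increasing in $w$, so the minimum over interval $i$ is attained at its lowest layer $I_{i-1}+1$ and the maximum over interval $k$ at its highest layer $I_k$; the full-interval condition and the single-layer condition then coincide, giving necessity and sufficiency in one stroke. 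Without this sign information, your proposed monotonicity step ``comparing $L_{ik}$ on adjacent layers $|X|=w$ and $|X|=w+1$'' stalls, because the difference $L_{ik}(X')-L_{ik}(X)=a_{ij'}-a_{kj'}$ could a priori have either sign.

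Your plan to ``chain the boundary conditions of intermediate pairs $(i,i-1),\dots,(k+1,k)$'' is not what is needed here; that telescoping is the mechanism of the \emph{next} lemma (Lemma~\ref{lem:ns_cond_refined}), which reduces the all-pairs condition to consecutive pairs only. In Lemma~\ref{lem:ns_cond} the condition is already given for every pair $(i,k)$ with $k<i$, so no chaining is required: once you have Proposition~\ref{prop:nec} (which, incidentally, can itself be recovered from the $(i,i-1)$ cases of the lemma's hypothesis), sufficiency is immediate.
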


In general, all $a_{ij}$s can be different compared to $\text{PARITY}$-based extensions. However, we cannot freely choose them. We have the following observation to prove Lemma \ref{lem:ns_cond}.

\begin{proposition} 
\label{prop:nec}
For any $I(f)$ interval-anchor assignment with an anchor matrix $A \in \mathbb{R}^{I(f)\times n}$, $a_{ij} > a_{(i-1)j}$ for all $i \in \{2,\dots,I(f)\}$ and $j \in \{1,\dots,n\}$.
\end{proposition}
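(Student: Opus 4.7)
The plan is to translate the $I(f)$ interval-anchor assignment into two distance inequalities between the consecutive anchors $a_{i-1}$ and $a_i$, and then extract coordinate-wise positivity by a one-bit-flip argument. Since consecutive intervals alternate in function value, $a_{i-1}$ and $a_i$ carry opposite labels; furthermore, under the assignment, $a_{i-1}$ serves every $X'$ with $|X'| = I_{i-1}$ (the last weight of interval $i-1$) and $a_i$ serves every $X$ with $|X| = I_{i-1}+1$ (the first weight of interval $i$). By the NN representation definition, the serving anchor must be strictly closer than any opposite-label anchor, so $d(X, a_i) < d(X, a_{i-1})$ and $d(X', a_{i-1}) < d(X', a_i)$ for every such $X$ and $X'$.

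Next, I expand the squared distances and cancel common terms. Setting $c_j := a_{ij} - a_{(i-1)j}$ and $C := \sum_{j=1}^n (a_{ij}^2 - a_{(i-1)j}^2)$, the two distance inequalities become the linear inequalities $2\sum_j x_j c_j > C$ and $2\sum_j x'_j c_j < C$. Subtracting gives the clean condition
\[
\sum_{j=1}^n (x_j - x'_j)\, c_j \,>\, 0
\]
for every admissible pair $(X, X')$ of Hamming weights $I_{i-1}+1$ and $I_{i-1}$.

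To isolate a single coordinate $j \in \{1,\ldots,n\}$, I will pick $X'$ to be the indicator of any $I_{i-1}$-subset of $[n] \setminus \{j\}$ and set $X := X' + e_j$. This is feasible because $0 \le I_{i-1} \le n-1$: the lower bound uses that interval $i-1$ is nonempty (so $I_{i-1} \ge I_{i-2}+1 \ge 0$ with $I_0 = -1$), while the upper bound uses that interval $i$ is nonempty (so $I_{i-1}+1 \le n$). For this chosen pair, all terms in the sum vanish except the $j$-th, and the inequality collapses to $c_j > 0$, i.e. $a_{ij} > a_{(i-1)j}$, as claimed.

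The only conceptual subtlety is justifying the very first step: that "$a_i$ serves $X$" in the assignment really does force $d(X, a_i) < d(X, a_{i-1})$. This is immediate from the fact that $a_{i-1}$ has the opposite label of $a_i$ together with the NN definition, which requires the serving anchor to beat \emph{every} opposite-label anchor. Once this observation is in hand, the remainder is a short algebraic manipulation followed by a single coordinate-picking argument, with no case analysis needed.
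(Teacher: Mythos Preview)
Your proposal is correct and follows essentially the same approach as the paper's own proof: take a pair of binary vectors at the boundary between intervals $i-1$ and $i$ that differ in exactly one coordinate, write down the two squared-distance inequalities coming from the assignment, subtract them, and read off positivity of that single coordinate of $a_i - a_{i-1}$. Your version is somewhat more explicit than the paper's about the feasibility of choosing such a pair (via $0 \le I_{i-1} \le n-1$) and about why consecutive anchors carrying opposite labels makes the NN definition yield the needed strict inequalities, but the core argument is identical.
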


It is possible to simplify the necessary and sufficient conditions given in Lemma \ref{lem:ns_cond}. We prove that looking at $(k,i)$ pairs in the form $(i-1,i)$ still provides the necessary and sufficient information compared to all $(k,i)$ pairs such that $k < i$.

\begin{lemma}
\label{lem:ns_cond_refined}
Let $f$ be an $n$-input symmetric Boolean function and $A \in \mathbb{R}^{I(f)\times n}$ be an anchor matrix for an $I(f)$ interval-anchor assignment. Then, the following condition is necessary and sufficient for any $i \in \{2,\dots,I(f)\}$:
\begin{align}
    \max_{X \in \{0,1\}^n : |X| = I_{i-1}} &\sum_{j=1}^n (a_{ij} - a_{(i-1)j})x_j < \frac{1}{2}\sum_{j=1}^n (a_{ij}^2 - a_{(i-1)j}^2) \nonumber \\
    \label{eq:ns_cond_refined}
    &< \min_{X \in \{0,1\}^n : |X| = I_{i-1}+1} \sum_{j=1}^n (a_{ij} -a_{(i-1)j})x_j
\end{align}
\end{lemma}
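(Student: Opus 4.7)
The direction~\eqref{eq:ns_condition}$\Rightarrow$\eqref{eq:ns_cond_refined} is immediate, since setting $k=i-1$ in Lemma~\ref{lem:ns_cond} recovers exactly the statement of Lemma~\ref{lem:ns_cond_refined}. The real content is the converse: the consecutive-only conditions already force the general condition for every pair $k<i$. My plan is to prove this by a telescoping decomposition of the anchor differences $a_i-a_k$, combined with the monotonicity supplied by Proposition~\ref{prop:nec}.

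\textbf{Telescoping step.} I would fix $k<i$ and write
\begin{equation*}
\sum_{j=1}^n (a_{ij}-a_{kj})x_j = \sum_{l=k+1}^{i}\sum_{j=1}^n (a_{lj}-a_{(l-1)j})x_j,
\end{equation*}
and similarly for $\sum_j (a_{ij}^2-a_{kj}^2)$. It then suffices to show, for every $X$ with $|X|=I_{i-1}+1$ and every $l\in\{k+1,\dots,i\}$, the strict inequality
\begin{equation*}
\sum_j (a_{lj}-a_{(l-1)j})x_j > \tfrac{1}{2}\sum_j (a_{lj}^2-a_{(l-1)j}^2),
\end{equation*}
since summing over $l$ immediately gives the right-hand bound of~\eqref{eq:ns_condition} for the pair $(k,i)$. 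By Proposition~\ref{prop:nec} every coefficient $a_{lj}-a_{(l-1)j}$ is strictly positive, so $\min_{|X'|=m}\sum_j (a_{lj}-a_{(l-1)j})x'_j$ is non-decreasing in $m$; since $|X|=I_{i-1}+1\geq I_{l-1}+1$ whenever $l\leq i$, I can invoke the right half of the consecutive condition~\eqref{eq:ns_cond_refined} at index $l$ to obtain exactly the required per-term bound. The left-hand bound of~\eqref{eq:ns_condition} is handled symmetrically by choosing $X$ with $|X|=I_k\leq I_{l-1}$ and using that $\max_{|X'|=m}\sum_j (a_{lj}-a_{(l-1)j})x'_j$ is likewise non-decreasing in $m$, so that the left half of~\eqref{eq:ns_cond_refined} at index $l$ applies.

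\textbf{Main obstacle.} The argument is really just bookkeeping once Proposition~\ref{prop:nec} is in hand; the one place where care is needed is preserving strictness through the telescoped sum. The monotonicity step on its own only yields a non-strict bound ($\geq$) between the per-$X$ quantity and the consecutive-condition extremum at the correct index, but the consecutive condition itself is strict, so the composition remains strict, and summing finitely many strict inequalities across $l=k+1,\dots,i$ is safe. This closes the sufficiency direction and, together with the trivial necessity, establishes the equivalence stated in Lemma~\ref{lem:ns_cond_refined}.
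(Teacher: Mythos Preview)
Your proposal is correct and follows essentially the same route as the paper: necessity by specializing to $k=i-1$, sufficiency by telescoping $a_i-a_k=\sum_{l=k+1}^i(a_l-a_{l-1})$ (and likewise for the squared differences) and then using Proposition~\ref{prop:nec} to push each constraint level $I_{l-1}$ to the extremal level $I_{i-1}$ (resp.\ $I_k$). Your phrasing---fixing a single $X$ with $|X|=I_{i-1}+1$ and bounding each summand before summing over $l$---is a slightly cleaner way to handle the ``sum of minima versus minimum of a sum'' issue than the paper's, and your remark on preserving strictness is a nice touch the paper leaves implicit.
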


We can now prove Theorem \ref{th:upper} by finding explicit sets of anchors satisfying Lemma \ref{lem:ns_cond_refined}. In Theorem \ref{th:upper}, the $B$ matrix is actually constructed by $b_{ij} = a_{ij} - a_{(i-1)j}$ for $i \in \{2,\dots,I(f)\}$ and $j \in \{1,\dots,n\}$. Intuitively, as long as the entries of $B$ do not differ much, we can find anchors where the LHS and the RHS in Eq. \eqref{eq:ns_cond_refined} should hold. Therefore, $B = \mathds{1} + \epsilon M$ is a valid choice for sufficiently small $\epsilon > 0$.

In Theorem \ref{th:upper}, we remark that the full row rank $B$ matrix property is only sufficient to find a construction. For example, if $b_{ij} = 1/n$ for all $i,j$, this construction can be reduced to the PARITY-based construction given in \cite{hajnal2022nearest}.

In general, the PARITY-based approach results in $O(\log{n})$ resolution. We can pick $B$ such that any symmetric Boolean function can also have $O(\log{n})$ resolution with $I(f)$ anchors. We present a family of examples of $B$ with $O(\log{n})$ resolution in Section \ref{sec:resolution}. Such an example was already given in Appendix \ref{apx:counterexample}.

\section{Lower Bounds on the NN Complexity of Symmetric Boolean Functions}
\label{sec:lower_bound}

There are various circuit complexity lower bounds on symmetric Boolean functions \cite{paturi1990threshold,siu1991depth,spielman1993computing}. Similarly, lower bounds for the number of anchors can be proven for PARITY using $\{1,2\}$-sign representations of Boolean functions \cite{hajnal2022nearest,hansen2015polynomial}. We prove a more general result for periodic symmetric Boolean functions.

\begin{theorem}
    \label{th:per_lb}
    For a periodic symmetric Boolean function of $I(f)$ intervals, $NN(f) \geq I(f)$.
\end{theorem}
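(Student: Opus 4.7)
I plan to prove Theorem~\ref{th:per_lb} by contradiction. Suppose $f$ is periodic with $I=I(f)$ intervals of period $T$ and there is an NN representation with $m<I$ anchors $a_1,\ldots,a_m$ carrying labels $\ell_1,\ldots,\ell_m\in\{0,1\}$. Writing $g(t)$ for the constant value of $f$ on the level set $L_t=\{X\in\{0,1\}^n:|X|=t\}$, every $X\in L_t$ must have a nearest anchor with label $g(t)$. Define the support $S(a_i):=\{t:a_i\text{ is the nearest anchor of some }X\in L_t\}$; then $S(a_i)\subseteq\{t:g(t)=\ell_i\}$ and $\bigcup_i S(a_i)=\{0,\ldots,n\}$.

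The heart of the proof is a key lemma: for periodic $f$, each $S(a_i)$ is contained in a single interval of $f$. Granted this, since $f$ has $I$ intervals that cover $\{0,\ldots,n\}$ and every interval must be covered by at least one anchor's support, pigeonhole yields $m\ge I$, contradicting $m<I$. To prove the lemma, suppose towards a contradiction that $t_1<t_2$ both lie in $S(a_i)$ but in two distinct same-labeled intervals of $f$; by periodicity a full opposite-labeled block of length $T$ sits strictly between them. Pick Boolean witnesses $X_1\in L_{t_1}$ and $X_2\in L_{t_2}$ each nearest to $a_i$, and exploit the symmetry of $f$: applying any coordinate permutation $\sigma$ to the entire anchor set yields another valid NN representation of the same Boolean function, since $f$ is symmetric. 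The plan is to walk from $X_1$ to $X_2$ by bit-swaps while simultaneously re-permuting the anchor set, producing for every intermediate weight $t\in(t_1,t_2)$ a Boolean vector $Y\in L_t$ whose nearest anchor (under some permuted representation) is a copy of $a_i$. At any $t$ inside the opposite-labeled block, this forces $g(t)=\ell_i$, contradicting the definition of $f$.

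The main obstacle is making this interpolation step rigorous. Convexity of the Voronoi cell of $a_i$ alone is not enough, since a convex cell can contain Boolean vectors of weights $t_1$ and $t_2$ without containing any of the intermediate weights, and naive Hamming-path edits may leave the cell at any single step. What rescues the argument is the interplay of two ingredients: the symmetry of $f$, which lets coordinate permutations of the anchors enlarge the pool of Boolean witnesses available at each intermediate weight, and the periodicity of $f$, which forces an opposite-labeled block of width $T$ strictly between any two same-labeled intervals so that the interpolation is guaranteed to land in a forbidden label. Making the interplay precise---so that some intermediate weight in the opposite block is certifiably served by $a_i$ or a permuted copy---is the technical crux, and is where periodicity (as opposed to the general symmetric case treated only in Section~\ref{sec:upper_bound}) plays its indispensable role.
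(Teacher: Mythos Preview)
Your proposal has a real gap: the key lemma (each $S(a_i)$ lies in a single interval of $f$) is asserted but never proved, and the interpolation sketch you offer does not close it. In fact the permutation device buys you nothing. Because the level sets $L_t$ are themselves permutation-invariant, the statement ``some $Y\in L_t$ is nearest to $\sigma(a_i)$ in the permuted representation'' is equivalent, via $Y'=\sigma^{-1}(Y)$, to ``some $Y'\in L_t$ is nearest to $a_i$ in the original representation.'' So your question collapses to: if the Voronoi cell of $a_i$ contains Boolean vectors of weights $t_1$ and $t_2$, must it contain one of every intermediate weight? You yourself note that convexity alone does not give this and that naive Hamming edits can leave the cell, and then you stop. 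More tellingly, nothing in your sketch uses the period $T$ beyond the fact that an opposite-labeled interval sits between any two same-labeled ones---which holds for \emph{every} symmetric $f$. If your argument could be completed as stated it would prove $NN(f)\ge I(f)$ for all symmetric $f$, which is exactly the conjecture the paper leaves open; that is strong evidence the missing step is the whole difficulty, not a detail.

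The paper's proof is completely different and uses periodicity in an essential, concrete way. It partitions the coordinate set into consecutive blocks of length $T$ and, after permuting rows and columns (Proposition~\ref{prop:sym_nn_equivalence}), arranges that $a_k$ maximizes the block-$k$ partial sum $\sum_{j=(k-1)T+1}^{kT}a_{kj}$ among $a_k,\ldots,a_m$. A short flip argument (Proposition~\ref{prop:lower_bound}: replace the $T$ zeros in a block by ones; periodicity forces the label to change; comparing squared distances yields a strict inequality on block sums) then shows that any $X$ with zeros throughout blocks $1,\ldots,k$ cannot be served by any of $a_1,\ldots,a_k$. If $m<I(f)$, every vector with zeros in blocks $1,\ldots,m-1$ must therefore be served by $a_m$ alone; but such vectors still realize weights in at least two consecutive intervals of $f$ and hence both labels, a contradiction. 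The equal block length $T$ is what makes the flip change $|X|$ by exactly one interval, and is why this argument does not extend to arbitrary symmetric functions.
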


To prove this Theorem, we use a necessary condition for any NN representation of a periodic symmetric Boolean function.

\begin{proposition}
    \label{prop:lower_bound}
    Consider an NN representation of a periodic symmetric Boolean function $f(X)$ of period $T$ with an anchor matrix $A \in \mathbb{R}^{m\times n}$. Assume that $\sum_{j=1}^T a_{1j} \geq \sum_{j=1}^T a_{ij}$ for any $i \geq 2$. Then, no binary vectors that are closest to $a_1$ can have all 0s in the first $T$ coordinates.
\end{proposition}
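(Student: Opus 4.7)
The plan is to argue by contradiction via a flip-construction that exploits the period-$T$ structure. First I would suppose some $X \in \{0,1\}^n$ has $x_1 = \cdots = x_T = 0$ and $a_1$ as its strictly closest anchor, i.e.\ $d(X,a_1) < d(X,a_i)$ for all $i \geq 2$ (the condition under which the NN representation assigns $X$ the label of $a_1$). I would then form $X'$ by flipping the first $T$ coordinates of $X$ from $0$ to $1$; since those coordinates were all zero, $X' \in \{0,1\}^n$ with $|X'| = |X| + T \leq n$. By the periodicity assumption, $|X|$ and $|X|+T$ lie in consecutive intervals of $f$, so $f(X') \neq f(X)$ — this is the only place the period-$T$ hypothesis is used.

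Next I would compute how the squared Euclidean distance to each anchor changes under this flip. Only the first $T$ coordinates of $X$ and $X'$ differ, and each changes from $0$ to $1$, yielding the clean identity
\begin{equation*}
    d(X',a_i)^2 - d(X,a_i)^2 = \sum_{j=1}^T \left[(1-a_{ij})^2 - a_{ij}^2\right] = T - 2\sum_{j=1}^T a_{ij}.
\end{equation*}
The hypothesis $\sum_{j=1}^T a_{1j} \geq \sum_{j=1}^T a_{ij}$ for $i \geq 2$ is exactly the statement that this increment is smallest (or tied) when $i = 1$. Combining this inequality on increments with the strict inequality $d(X,a_1) < d(X,a_i)$ preserves strictness, giving $d(X',a_1) < d(X',a_i)$ for every $i \geq 2$. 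Hence $a_1$ is still the strictly closest anchor to $X'$, so the NN representation must assign $X$ and $X'$ the same label as $a_1$, contradicting $f(X') \neq f(X)$.

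The hard part is identifying the right swap rather than verifying anything tricky afterwards: the proposition is framed precisely so that flipping exactly the first $T$ coordinates both shifts the Hamming weight by the period (triggering a label change via periodicity) and produces a distance-increment that depends on each anchor only through $\sum_{j=1}^T a_{ij}$. Once this flip is chosen, the remaining work is routine bookkeeping — checking $X' \in \{0,1\}^n$ (which is guaranteed by the $T$ leading zeros of $X$) and reading "closest to $a_1$" in the strict NN sense so that a common or smaller additive increment preserves the inequality.
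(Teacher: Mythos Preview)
Your proposal is correct and follows essentially the same approach as the paper: both argue by contradiction, flip the first $T$ zero coordinates of $X$ to obtain $X'$ with $f(X')\neq f(X)$ by periodicity, and then use the identity $d(X',a_i)^2-d(X,a_i)^2=T-2\sum_{j=1}^T a_{ij}$ together with the hypothesis on the partial sums to derive the contradiction. The only cosmetic difference is that the paper subtracts the two distance inequalities to exhibit an $i$ violating the sum hypothesis, whereas you add the increment to show $a_1$ remains strictly closest to $X'$; these are the same computation viewed from opposite ends.
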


We first claim that the assumption in Proposition \ref{prop:lower_bound} can be assumed without loss of generality by first observing that the NN representations of symmetric Boolean functions are equivalent up to permutations of anchors. We prove this explicitly in the Appendix \ref{apx:proofs_sec_lower}. Then, we find the anchor with the maximal $T$ sum in the entries and rearrange the rows and columns so that the claim $\sum_{j=1}^T a_{1j} \geq \sum_{j=1}^T a_{ij}$ for any $i \geq 2$ holds.

In addition, we prove that if $I(f) \leq 4$, then the NN complexity of any $n$-input symmetric Boolean function has the lower bound $NN(f) \geq I(f)$. We present the proof in Appendix \ref{apx:ex_no_interval_anchor} based on the tools developed in Section \ref{sec:resolution}.

\section{The Resolution of NN Representations}
\label{sec:resolution}

It is known that the PARITY-based approach results in $\lceil\log_2{(n+1)}\rceil$ resolution constructions for symmetric Boolean functions. We claim that the construction presented in Section \ref{sec:upper_bound} can admit $O(\log{n})$ resolution. Let Moore-Penrose pseudoinverse of a matrix $A$ be denoted by $A^+$.

\begin{theorem}
\label{th:sym_res}
    Suppose that we are given a rational matrix $B \in \mathbb{Q}^{m\times n}$ such that $ B = \mathds{1}_{m\times n} + \epsilon I_{m,n}$ where $\mathds{1}_{m\times n}$ is an all-one matrix, $\epsilon$ is a rational constant, and $I_{m,n}$ is a submatrix of the $n\times n$ identity matrix with the first $m$ rows. Then, $RES(B^+) = O(\log{n})$.
\end{theorem}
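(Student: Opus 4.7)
The plan is to compute $B^{+}$ in closed form and then bound the numerators and denominators of its entries. Since $B=\mathds{1}_{m\times n}+\epsilon I_{m,n}$ has full row rank for $\epsilon\neq 0$ (e.g.\ by inspecting the determinant of the leading $m\times m$ submatrix), we may use the identity $B^{+}=B^{T}(BB^{T})^{-1}$ and reduce the problem to analyzing $(BB^{T})^{-1}$.

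First I would compute $BB^{T}$ explicitly. Each row of $B$ is the all-ones row perturbed by $\epsilon$ in one coordinate $i\in\{1,\dots,m\}$, so a direct calculation gives $(BB^{T})_{ii}=n+2\epsilon+\epsilon^{2}$ and $(BB^{T})_{ik}=n+2\epsilon$ for $i\neq k$. Hence
\begin{equation}
BB^{T}=\epsilon^{2}I_{m}+(n+2\epsilon)J_{m},
\end{equation}
where $J_{m}$ is the $m\times m$ all-ones matrix. The inverse of any matrix of the form $aI_{m}+bJ_{m}$ is itself of the same form, namely
\begin{equation}
(aI_{m}+bJ_{m})^{-1}=\tfrac{1}{a}I_{m}-\tfrac{b}{a(a+mb)}J_{m},
\end{equation}
which I would plug in with $a=\epsilon^{2}$ and $b=n+2\epsilon$ to obtain a closed-form expression for $(BB^{T})^{-1}$ whose entries are rational functions of $n$, $m$, and $\epsilon$.

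Next, writing $\epsilon=p/q$ in lowest terms (where $p,q$ are fixed integers independent of $n$), I would clear denominators and observe that both the numerators and the denominators of the entries of $(BB^{T})^{-1}$ are integer polynomials in $n$ and $m$ of bounded degree (specifically, $D=\epsilon^{2}(\epsilon^{2}+m(n+2\epsilon))$ is quadratic in $n,m$ after scaling by $q^{4}$). Since $m\leq n$, every such polynomial is bounded in absolute value by a polynomial in $n$, so each entry of $(BB^{T})^{-1}$ is a rational $a/b$ with $|a|,|b|=\mathrm{poly}(n)$, giving resolution $O(\log n)$ per entry. Finally, since $B^{T}$ has entries in $\{1,1+\epsilon\}$, forming $B^{+}=B^{T}(BB^{T})^{-1}$ produces entries that are sums of at most $m$ such rationals with a common denominator; the resulting numerators and denominators remain polynomial in $n$, so $RES(B^{+})=O(\log n)$ as claimed.

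The main obstacle I expect is not conceptual but bookkeeping: one must verify that the integer numerators that arise after clearing $\epsilon$ really are polynomially bounded in $n$ (rather than blowing up due to repeated multiplications), and that common factors do not need to be tracked carefully for the $RES$ upper bound---which is fine because the definition of $RES$ only requires an upper bound on $\max\{\log_{2}|a+1|,\log_{2}|b+1|\}$ for \emph{some} representation $a/b$, not the reduced one.
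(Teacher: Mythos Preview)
Your argument is correct and arrives at the same closed-form expression for $B^{+}$ that the paper obtains, but by a different and more elementary route. The paper invokes Meyer's rank-one update formula for pseudoinverses (their Theorem~\ref{th:pseudo}), writing $B=\epsilon I_{m,n}+\mathds{1}_{m}\mathds{1}_{n}^{T}$ and applying the update to $A=\epsilon I_{m,n}$. You instead use the fact that $B$ has full row rank, so that $B^{+}=B^{T}(BB^{T})^{-1}$, and then invert $BB^{T}=\epsilon^{2}I_{m}+(n+2\epsilon)J_{m}$ via the standard formula for $(aI+bJ)^{-1}$. Your path is arguably cleaner: it avoids the external citation and requires only two well-known identities. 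The paper's route, on the other hand, makes the block structure of $B^{+}$ (first $m$ rows versus last $n-m$ rows) appear directly from the Meyer formula; in your approach this block structure emerges when you multiply by $B^{T}$, which splits naturally as $\bigl[\begin{smallmatrix}J_{m}+\epsilon I_{m}\\ \mathds{1}_{(n-m)\times m}\end{smallmatrix}\bigr]$.

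Two small corrections. First, your justification for full row rank (``inspecting the determinant of the leading $m\times m$ submatrix'') is not quite right, since that submatrix $J_{m}+\epsilon I_{m}$ is singular when $\epsilon=-m$; the cleaner check is that $BB^{T}=\epsilon^{2}I_{m}+(n+2\epsilon)J_{m}$ has eigenvalues $\epsilon^{2}$ and $\epsilon^{2}+m(n+2\epsilon)=m(n-m)+(m+\epsilon)^{2}>0$ whenever $m<n$ and $\epsilon\neq 0$. Second, your final remark about $RES$ is slightly off: the paper's definition \emph{does} require $a,b$ coprime. However, this does not damage your argument, since reducing a fraction can only decrease $|a|$ and $|b|$, so any polynomial bound on a non-reduced representation carries over to the reduced one.
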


\begin{corollary}
    \label{cor:res}
     For every $n$-input symmetric Boolean function, there is an anchor matrix $A \in \mathbb{Q}^{I(f) \times n}$ such that $RES(A) = O(\log{n})$.
\end{corollary}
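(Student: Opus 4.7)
The plan is to combine the explicit construction of Theorem~\ref{th:upper} with the resolution bound of Theorem~\ref{th:sym_res}. In Theorem~\ref{th:upper}, I take $M = I_{m,n}$ with $m = I(f) - 1$ and $\epsilon$ a rational of the form $1/\text{poly}(n)$, so that $B = \mathds{1}_{m\times n} + \epsilon I_{m,n}$ has full row rank. The key step is then to recover the anchor matrix $A$ from $B$ by solving a small linear system, and bound the resolution of the resulting $A$ by the resolution of $B^+$.

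Writing $a_i = a_1 + r_i$ with $r_i := \sum_{k=1}^{i-1} \beta_k$, where $\beta_k$ denotes the $k$-th row of $B$ (so that $\beta_k = a_{k+1} - a_k$), it remains only to choose $a_1$ so that Lemma~\ref{lem:ns_cond_refined} is satisfied. Because $\beta_k = \vec{1} + \epsilon e_k$, the bounds in the lemma evaluate essentially to $I_k + \epsilon$ (max) and $I_k + 1$ (min, modulo the boundary cases $I_k = 0$ and $I_k = n-1$), leaving an interval of length at least $1 - \epsilon$ in which a rational midpoint target $t_k$ of $O(\log n)$ resolution can be placed. Using the identity $\tfrac{1}{2}(|a_{k+1}|^2 - |a_k|^2) = \langle a_1, \beta_k\rangle + \tfrac{1}{2}(|r_{k+1}|^2 - |r_k|^2)$, the condition of the lemma reduces to the linear system $B a_1 = s$, where $s_k := t_k - \tfrac{1}{2}(|r_{k+1}|^2 - |r_k|^2)$, and each $s_k$ is rational of $O(\log n)$ resolution.

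Setting $a_1 = B^+ s$ (valid since $B$ has full row rank), Theorem~\ref{th:sym_res} gives $RES(B^+) = O(\log n)$. Each entry of $r_i$ is also rational of $O(\log n)$ resolution, being a sum of at most $n$ rows of $B$. Hence the resulting anchors $a_i = a_1 + r_i$ satisfy $RES(A) = O(\log n)$, which is the corollary's claim.

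The main obstacle will be the last step: a matrix-vector product $B^+ s$ is a sum of $m$ rationals, which could a priori inflate denominators multiplicatively up to $\text{poly}(n)^m$. The key observation behind Theorem~\ref{th:sym_res} is that all entries of $B^+$ can be written with a \emph{common} polynomial denominator, visible through the identity $B^+ = B^T (BB^T)^{-1}$ and the rank-one-plus-scaled-identity structure $BB^T = \epsilon^2 I_m + (n + 2\epsilon)\vec{1}\vec{1}^T$. Combined with the common polynomial denominator of $s$, this forces the entries of $B^+ s$ to share a polynomial denominator as well, yielding the $O(\log n)$-resolution bound on $A$.
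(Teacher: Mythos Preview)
Your proposal is correct and follows essentially the same route as the paper: pick $B=\mathds{1}_{m\times n}+\epsilon I_{m,n}$ with $m=I(f)-1$, reduce Lemma~\ref{lem:ns_cond_refined} to a linear system $Ba_1=s$, solve via $a_1=B^{+}s$, and invoke Theorem~\ref{th:sym_res}. One small mismatch: Theorem~\ref{th:sym_res} is stated for a rational \emph{constant} $\epsilon$, and the paper simply takes $\epsilon=1/2$ (your own gap computation shows any $\epsilon\in(0,1)$ works), so there is no need for $\epsilon=1/\mathrm{poly}(n)$; your common-denominator argument via $B^{+}=B^{T}(BB^{T})^{-1}$ is nonetheless a welcome sharpening of a step the paper leaves implicit.
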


We refer to Theorem 3 in \cite{meyer1973generalized} to find the entries of $B^+$ explicitly and to prove Theorem \ref{th:sym_res}.

Now we consider the resolution of the NN representation of linear threshold functions. By a hyperplane argument, we prove the following result which can be verified algebraically.

\begin{theorem}
    \label{th:2-anchor}
    Let $f$ be a non-constant $n$-input linear threshold function with weight vector $w \in \mathbb{Z}^n$ and the threshold term $b \in \mathbb{Z}$. Then, there is a $2$-anchor NN representation with resolution $O(RES(w))$. In general, the resolution is $O(n\log{n})$.
\end{theorem}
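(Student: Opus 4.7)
The plan is to exploit the fact that a two-anchor NN representation is equivalent to classifying $X \in \{0,1\}^n$ by which side of the perpendicular bisector of the two anchors it lies on. Since $w \in \mathbb{Z}^n$ and $X \in \{0,1\}^n$ imply $\langle w, X \rangle \in \mathbb{Z}$, the hyperplane $H : \langle w, x \rangle = b - \tfrac{1}{2}$ strictly separates $f^{-1}(0)$ from $f^{-1}(1)$. I will therefore place anchors $p, q \in \mathbb{Q}^n$ so that their perpendicular bisector is exactly $H$. Let $m = \tfrac{2b-1}{2\|w\|^2} w$, the foot of the perpendicular from the origin onto $H$; then $\langle w, m \rangle = b - \tfrac{1}{2}$. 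Setting
\begin{align*}
    p = m + \tfrac{1}{2} w, \qquad q = m - \tfrac{1}{2} w,
\end{align*}
makes $p - q = w$ (so the bisector is normal to $w$) and $(p+q)/2 = m \in H$, as required, with $p$ on the $f=1$ side since $\langle w, p\rangle = b - \tfrac12 + \tfrac12 \|w\|^2 > b - \tfrac12$.

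Verification is an algebraic identity: for any $X \in \{0,1\}^n$,
\begin{align*}
    d(X,p)^2 - d(X,q)^2 &= \langle p+q,\, p-q \rangle - 2\langle p-q,\, X \rangle \\
    &= (2b-1) - 2\langle w, X \rangle,
\end{align*}
which is negative iff $\langle w, X \rangle > b - \tfrac{1}{2}$, equivalent by integrality to $\langle w, X \rangle \geq b$, i.e., $f(X) = 1$. Non-constancy of $f$ ensures each of $p, q$ is the nearest anchor for at least one binary point, so this is a legitimate two-anchor NN representation.

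For the resolution bound, each entry simplifies to
\begin{align*}
    p_i = \frac{w_i(\|w\|^2 + 2b - 1)}{2\|w\|^2}, \qquad q_i = \frac{w_i(2b - 1 - \|w\|^2)}{2\|w\|^2}.
\end{align*}
Using $|b| \leq \sum_j |w_j| = O(n \cdot 2^{RES(w)})$ and $\|w\|^2 = O(n \cdot 4^{RES(w)})$, both the numerator and the denominator (even in lowest terms) are bounded by a polynomial in $n \cdot 2^{RES(w)}$, giving $RES(p_i), RES(q_i) = O(\log n + RES(w)) = O(RES(w))$. The ``in general'' $O(n \log n)$ claim then follows from Muroga's classical bound \cite{muroga1971threshold} that every $n$-input LTF admits an integer weight representation with $RES(w) = O(n \log n)$. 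The main obstacle is the resolution bookkeeping: choosing $m$ as the perpendicular foot from the origin keeps each $p_i, q_i$ as a clean scalar multiple of $w_i$, which is essential to prevent an extra factor of $\|w\|^2$ from appearing in the numerators and inflating the resolution.
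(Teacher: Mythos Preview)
Your geometric setup is correct, and the verification that the perpendicular bisector of $p,q$ is exactly the hyperplane $\langle w,x\rangle = b-\tfrac12$ is clean. The gap is in the resolution bookkeeping: you derive $RES(p_i),RES(q_i)=O(\log n + RES(w))$ and then assert this equals $O(RES(w))$. That last equality is false whenever $RES(w)=o(\log n)$. Concretely, take the symmetric threshold $w=(1,\dots,1)$, so $RES(w)=1$; your formula gives $p_i=\frac{n+2b-1}{2n}$, which for generic $b$ (say $n=7$, $b=3$, yielding $p_i=6/7$) has resolution $\Theta(\log n)$, not $O(1)$. The culprit is the denominator $2\|w\|^2$: it carries an unavoidable factor of order $n$ that does not cancel in general, so your choice of $m$ as the perpendicular foot from the origin is precisely what \emph{introduces} the extra $\log n$, contrary to what your last paragraph claims.

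The paper sidesteps this by choosing the midpoint $x^*$ differently: it finds adjacent hypercube vertices $X',X''$ on opposite sides of the hyperplane (differing only in one coordinate $k$), and takes $x^*$ to agree with $X'$ in all coordinates except $x^*_k=(b-\tfrac12-w^TX')/w_k$. Thus $x^*$ has $n-1$ binary entries and one rational entry with denominator $2w_k$ rather than $2\|w\|^2$; with $c=1$ the anchors $x^*\pm w$ then have every entry bounded in numerator and denominator by $O(w_k^2)$, giving genuine $O(RES(w))$ resolution independent of $n$. Your construction is a valid two-anchor representation, but to match the stated bound you need a midpoint on $H$ whose coordinates do not all share a $\Theta(n)$-sized denominator.
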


\begin{corollary}
    \label{cor:2-anchor}
    Let $f$ be an $n$-input symmetric linear threshold function with threshold $b$. Then, there is a $2$-anchor NN representation with resolution $O(1)$.
\end{corollary}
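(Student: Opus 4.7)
The plan is to observe that the statement is an immediate specialization of Theorem \ref{th:2-anchor}. By the definition in Section \ref{sec:lt}, an $n$-input symmetric linear threshold function with threshold $b$ is $f(X) = \mathds{1}\{|X| \geq b\}$, i.e., a linear threshold function with weight vector $w = \mathds{1} \in \mathbb{Z}^n$ (the all-ones vector) and integer threshold $b$. Non-constancy amounts to $1 \leq b \leq n$, so the hypotheses of Theorem \ref{th:2-anchor} are met.

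Given this, the key steps are: (i) instantiate Theorem \ref{th:2-anchor} with $w = \mathds{1}$ to obtain a 2-anchor NN representation whose resolution is $O(RES(w))$; and (ii) evaluate $RES(w)$. For step (ii), every entry of $w$ equals $1$, and the rational-resolution definition gives $RES(1) = \lceil \max\{\log_2 2, \log_2 2\}\rceil = 1$, hence $RES(w) = 1 = O(1)$. Combining (i) and (ii) yields the claimed $O(1)$-resolution 2-anchor representation.

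The only content-bearing issue, which lives inside the proof of Theorem \ref{th:2-anchor} rather than the corollary itself, is that the $O(RES(w))$ bound must have \emph{no} dependence on $b$. A naive hyperplane construction placing the two anchors symmetrically along $\mathrm{span}(w)$ about the hyperplane $w \cdot X = b - \tfrac{1}{2}$ would put their midpoint at $\tfrac{2b-1}{2\|w\|^2} w$, whose entries have resolution $\Theta(\log n)$ when $w = \mathds{1}$ and would therefore fail to be $O(1)$. The hyperplane argument behind Theorem \ref{th:2-anchor} must (and does) choose the midpoint more cleverly, for instance by taking coordinates in $\{0, \tfrac{1}{2}, 1\}$ whose sum equals $b - \tfrac{1}{2}$ (equivalently, expressing the odd integer $2b-1 \leq 2n-1$ as a sum of $n$ terms from $\{0,1,2\}$), and then shifting by $\tfrac{1}{2}\mathds{1}$ to split into the two anchors. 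This is where the main obstacle would have lived; since Theorem \ref{th:2-anchor} is already proved, the corollary follows by direct substitution as described.
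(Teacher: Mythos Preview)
Your proposal is correct and follows essentially the same approach as the paper: the corollary is obtained by instantiating Theorem~\ref{th:2-anchor} with $w = \mathds{1}$, for which $RES(w) = O(1)$. The paper makes this explicit immediately after the corollary by displaying the resulting anchors $a_1 = (0,\dots,0,-1/2,-1,\dots,-1)$ and $a_2 = (2,\dots,2,3/2,1,\dots,1)$ (i.e., your midpoint $x^*$ with entries in $\{0,\tfrac12,1\}$ shifted by $c\mathds{1}$ with $c=1$ rather than $c=\tfrac12$), which confirms exactly the construction you anticipated.
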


Hence, symmetric linear threshold functions have $2$ intervals and optimal NN representations in size and resolution. It is easy to verify that
\begin{align}
        a_1 &= (0,\dots,0,-1/2,-1,\dots,-1) \\
        a_2 &= (2,\dots,2,\hphantom{-}3/2,\hphantom{-}1,\dots,\hphantom{-}1)
\end{align}
is always an NN representation where the number of $0$s in $a_1$ (and $2$s in $a_2$) is $b-1$.

However, if the number of intervals is at least 3, the resolution is lower bounded by $\Omega(\log{n})$ for some symmetric Boolean functions and the construction presented in Section \ref{sec:upper_bound} has optimal resolution by Corollary \ref{cor:res}.

\begin{theorem}
    \label{th:res_lb}
    Let $f$ be an $n$-input symmetric Boolean function where $f(X) = 1$ for $|X| =  \lfloor n/2 \rfloor + 1$ and $0$ otherwise. Then, any $3$-anchor NN representation has $\Omega(\log{n})$ resolution.
\end{theorem}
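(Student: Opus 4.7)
My plan is to combine the necessary conditions from Lemma~\ref{lem:ns_cond_refined} with the coordinate-permutation symmetry of $f$ to extract a specific rational invariant of the representation that must lie in an interval of width $O(1/n)$, forcing the underlying denominators to be $\Omega(n)$. Because $I(f)=3$ and there are three anchors, the representation must realize an interval-anchor assignment, so by Proposition~\ref{prop:nec} we have $a_{1j}<a_{2j}<a_{3j}$ for every coordinate $j$. Setting $u_j=a_{2j}-a_{1j}>0$, $v_j=a_{3j}-a_{2j}>0$ and $b=\lfloor n/2\rfloor+1$, Lemma~\ref{lem:ns_cond_refined} then asserts that the sum of the $b-1$ largest $u_j$'s is strictly less than $C_1 = \tfrac{1}{2}\sum_j(a_{2j}^2-a_{1j}^2)$ which is strictly less than the sum of the $b$ smallest $u_j$'s, with the analogous bracketing for $v_j$ and $C_2$. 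A short swap argument on $b$-subsets then shows $\max_j u_j/\min_j u_j < b/(b-1) = 1+O(1/n)$, and likewise for $v$, so that $u,v$ are near-uniform.

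Next I would symmetrize by averaging over coordinate permutations $\sigma\in S_n$. The identity
\[
\mathbb{E}_\sigma\bigl[d(X^\sigma,a_i)^2\bigr] \;=\; d(X,\bar a_i\,\mathds{1})^2 + n\,\mathrm{Var}(a_i),
\]
applied to the strict per-input inequalities for $|X|=b$ and $|X|=b-1$, yields, with $\bar u=\bar a_2-\bar a_1$ and $E=n(\mathrm{Var}(a_1)-\mathrm{Var}(a_2))$,
\[
\bar u\bigl(n(\bar a_1+\bar a_2)-2b\bigr) \;<\; E \;<\; \bar u\bigl(n(\bar a_1+\bar a_2)-2(b-1)\bigr),
\]
and the analogous bracketing around $\bar a_2+\bar a_3$ with $\bar v$ and $F=n(\mathrm{Var}(a_2)-\mathrm{Var}(a_3))$. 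In the purely symmetric case (where all $a_i=c_i\mathds{1}$) the variances vanish and this directly recovers $\bar a_3-\bar a_1=\bar u+\bar v<4/n$. The key to the general case is to leverage the near-uniform bounds on $u,v$ from the first step, together with the non-averaged min/max inequalities of Lemma~\ref{lem:ns_cond_refined} applied to carefully chosen $b$- and $(b\pm1)$-subsets, to bound $|E|/\bar u$ and $|F|/\bar v$ by an absolute constant, so the bound $\bar u+\bar v = O(1/n)$ survives.

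The conclusion is then a short arithmetic computation. Each $u_j$ is the difference of two entries of resolution~$R$, so $u_j$ is a positive rational with denominator dividing the product of at most two denominators bounded by $2^R$, hence $u_j\ge 2^{-2R}$; thus $\bar u\ge 2^{-2R}$ and likewise $\bar v\ge 2^{-2R}$. Combining $\min(\bar u,\bar v)=O(1/n)$ with $\bar u,\bar v\ge 2^{-2R}$ gives $2^{2R}=\Omega(n)$, i.e.\ $R=\Omega(\log n)$.

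The main obstacle is exactly the middle step: controlling the variance corrections $E$ and $F$ in the asymmetric case. Naively these terms could push the allowed windows for $\bar a_1+\bar a_2$ and $\bar a_2+\bar a_3$ to be much wider than $O(1/n)$, and indeed this is why the paper emphasises that asymmetric anchors can be strictly better in resolution than symmetric ones. The delicate point is that the near-uniform structure of $u$ and $v$, combined with the sharp min/max conditions of Lemma~\ref{lem:ns_cond_refined} on specifically chosen slice vectors, must be used to bound $|E|/\bar u$ and $|F|/\bar v$ by $O(1)$; once this variance control is in place, the resolution lower bound follows immediately.
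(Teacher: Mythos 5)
Your skeleton — reduce to an interval-anchor assignment (Lemma~\ref{lem:3_interval}), extract constraints on the difference vectors $u_j=a_{2j}-a_{1j}$, $v_j=a_{3j}-a_{2j}$ from Lemma~\ref{lem:ns_cond_refined}, and finish with the observation that a positive difference of two resolution-$R$ rationals is at least $2^{-2R}$ — matches the paper's endgame, and your permutation-averaging identity is correct. The gap is in the middle, exactly where you flag it. First, the claimed near-uniformity $\max_j u_j/\min_j u_j<b/(b-1)$ is false: the only information Lemma~\ref{lem:ns_cond_refined} gives about $u$ alone is that every $(b-1)$-subset sum is smaller than every $b$-subset sum, and the vector $u=(\delta,M,\dots,M)$ with $\delta>0$ arbitrarily small satisfies this (every $(b-1)$-subset sums to at most $(b-1)M$, every $b$-subset to at least $\delta+(b-1)M$) while $\max_j u_j/\min_j u_j=M/\delta$ is unbounded. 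What does follow is only bulk control — order statistics away from the extremes are within a factor $1+O(1/\sqrt{n})$ and the maximum is at most twice a bulk entry, which is precisely \eqref{eq:first_b_bound} and \eqref{eq:second_b_bound} in the paper — and a handful of coordinates of $u$ (or $v$) can be arbitrarily small.

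Because near-uniformity fails, your route to bounding $|E|/\bar u$ and $|F|/\bar v$ by $O(1)$ has no supporting argument, and the intermediate goal of an unconditional, resolution-free bound $\bar u+\bar v=O(1/n)$ is not something the slice conditions alone appear to provide. In the paper's proof the resolution enters before the final rational-gap step: the coordinates are split into the $\sqrt{n}$ smallest, the bulk, and the $\sqrt{n}$ largest; the bulk is handled with \eqref{eq:first_b_bound} and \eqref{eq:second_b_bound}, while the outliers' contribution is bounded by $\sqrt{n}\,2^{r+1}$ using $|a_{1j}+a_{2j}|<2^{r+1}$ under the contradiction hypothesis $r<c\log_2 n$; only then does one obtain $n-O(n^{\epsilon})<\sum_j(a_{1j}+a_{2j})<n+O(n^{\epsilon})$ and the analogous bound for $a_2,a_3$, whence some coordinate of $a_3-a_1$ is $O(n^{\epsilon-1})$, contradicting the $2^{-2r}$ lower bound. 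To repair your plan you would need to (i) weaken your step one to these bulk statements and (ii) let the assumed resolution bound control the $O(\sqrt{n})$ outlier coordinates' contribution to $E$ and $F$, i.e., run the whole middle of the argument as a contradiction in which $r$ appears in the estimates, rather than proving an $O(1/n)$ bound first and invoking resolution only at the end.
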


There is an anchor matrix $A$ for the function in Theorem \ref{th:res_lb} such that $RES(A) = O(\log{n})$ via PARITY-based extensions.

To prove Theorem \ref{th:res_lb}, we first prove that all but two symmetric Boolean functions with 3 intervals require an interval-anchor assignment. We then use the necessary and sufficient condition in Lemma \ref{lem:ns_cond_refined} to prove the resolution lower bound.

\begin{lemma}
    \label{lem:3_interval}
    Let $f$ be an $n$-input symmetric Boolean function with $3$ intervals. Then, any $I(f)$-anchor NN representation of $f$ needs to be an interval-anchor assignment except for the function $f(X) = 1$ (or $0$) for $|X| \in \{0,n\}$ and 0 (or $1$) otherwise.
\end{lemma}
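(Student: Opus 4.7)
The plan is to prove the contrapositive. Write the three intervals of $f$ as $V_1 = [0, I_1]$, $V_2 = [I_1+1, I_2]$, $V_3 = [I_2+1, n]$, and assume without loss of generality (by swapping the $P$ and $N$ roles, which gives the ``or $0$'' alternative in the statement) that $V_1$ and $V_3$ are the positive intervals. An $I(f) = 3$ interval-anchor assignment then uses two positive anchors, one per outer interval, and a single negative anchor for the middle. A $3$-anchor NN representation that is not interval-anchor must therefore have either (i) the mismatched label distribution: one positive anchor $p$ and two negative anchors $b_1, b_2$; or (ii) the correct $(2,1)$ label distribution but with one positive anchor that is redundant (closest to no Boolean input). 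In case (ii), removing the redundant anchor yields a $2$-anchor NN representation; since a $2$-anchor NN representation characterizes a (non-constant) linear threshold function and a symmetric linear threshold function satisfies $I(f) \leq 2$, this contradicts $I(f) = 3$, so case (ii) is impossible. The substance of the proof is case (i), where I must show the representation forces $V_1 = \{0\}$ and $V_3 = \{n\}$.

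In case (i), encode the distance comparisons as linear threshold functions $S_i(X) = \mathds{1}\{u_i^\top X > e_i\}$ with $u_i = 2(p - b_i)$ and $e_i = \|p\|^2 - \|b_i\|^2$; eliminating $b_i = p - u_i/2$ yields the anchor-consistency identity $p \cdot u_i = e_i + \|u_i\|^2/4$. The NN condition becomes $S_1(X) \wedge S_2(X) = \mathds{1}[|X| \in V_1 \cup V_3]$. To extract coordinate-wise sign information, I compare Boolean neighbors across the two middle-interval boundaries. For any $X$ with $|X| = I_1$ (where $S_1(X) = S_2(X) = 1$) and any $j$ with $X_j = 0$, the neighbor $Y = X + e_j$ has $|Y| = I_1 + 1$ and forces $S_i(Y) = 0$ for some $i$; substituting $u_i^\top Y = u_i^\top X + u_{ij}$ into the chain $u_i^\top Y \leq e_i < u_i^\top X$ gives $u_{ij} < 0$ for that $i$. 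Since the conclusion must hold for every such $X$, at least one of $u_{1j}, u_{2j}$ is strictly negative. The symmetric argument across the upper boundary (weight $I_2+1$ to $I_2$, by flipping a $1$-bit to $0$) forces at least one to be strictly positive. Hence $u_{1j}$ and $u_{2j}$ have strictly opposite signs at every coordinate; equivalently, $p_j$ lies strictly between $b_{1j}$ and $b_{2j}$.

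Now assume for contradiction that $f$ is not the exception, so $I_1 \geq 1$ or $I_2 \leq n-2$; by the symmetry of the two boundaries I can take $I_1 \geq 1$. Then every $e_j$ is a positive input, forcing $u_{ij} > e_i$ for all $i, j$; combined with the sign pattern, $e_i < u_{ij} < 0$ whenever $u_{ij}$ is the negative one. Partition the coordinates into $J^+ = \{j : u_{1j} > 0\}$ and $J^- = \{j : u_{1j} < 0\}$, on which $u_2$ has the opposite sign. For any $T \subseteq J^+$ with $|T| = I_1+1$, the indicator $X_T$ has $|X_T| = I_1 + 1 \in V_2$ and $u_1^\top X_T > 0 > e_1$ gives $S_1(X_T) = 1$, so the middle-interval condition forces $S_2(X_T) = 0$, i.e., $\sum_{j \in T} u_{2j} \leq e_2$. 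Combining these partial-sum inequalities over all admissible $T$ with the individual bounds $u_{2j} > e_2$ on $J^+$, the anchor-consistency identity $p \cdot u_2 = e_2 + \|u_2\|^2/4$, and the analogous upper-boundary bound $u_{ij} < s_i - e_i$ (where $s_i = \sum_j u_{ij}$) coming from $\bar e_j \in V_3$ when $I_2 \leq n-2$, and treating the residual case $|J^+| < I_1 + 1$ using mixed subsets $T \subseteq J^+ \cup J^-$, I obtain an over-determined system that is inconsistent unless both outer intervals collapse to singletons. The main obstacle is this partial-sum case analysis, because the bare inequality $\sum_{j \in T} u_{2j} \leq e_2$ is individually compatible with $u_{2j} > e_2$ (it only requires $(I_1+1) e_2 < e_2$, which holds for $I_1 \geq 1$ since $e_2 < 0$); the contradiction emerges only after coupling many choices of $T$, both LTFs, and the anchor-consistency identity.
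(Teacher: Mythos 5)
Your case analysis is incomplete. With the outer intervals $V_1,V_3$ positive, a non-interval-anchor $3$-anchor representation can also have the $(2,1)$ label distribution with \emph{both} positive anchors serving Boolean inputs, but with at least one of them closest to inputs from both $V_1$ and $V_3$ (e.g., $p_1$ serves all of $V_1$ and part of $V_3$, $p_2$ serves the rest of $V_3$). A short pigeonhole argument shows that every non-interval-anchor $(2,1)$ assignment is of this form or has a redundant anchor, so this is not a fringe case --- it is the main case in the paper's proof (its Case 1). The paper disposes of it by taking $X_1,X_2$ assigned to the same outer anchor with $|X_1|\le I_1$ and $|X_2|>I_2$, exhibiting a convex combination of them that also lies in the convex hull of the weight-$t$ inputs for some $t\in[I_1+1,I_2]$ (via averaged vectors supported on the blocks where $X_1,X_2$ agree or disagree), and invoking the fact that the convex hulls of the inputs assigned to a positive anchor and to a negative anchor are disjoint. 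Your case (ii) argument (delete the redundant anchor, obtain a $2$-anchor representation, hence a threshold function, hence $I(f)\le 2$) is correct but only covers the degenerate sub-case of this configuration.

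For your case (i) (one positive, two negative anchors --- the paper's Case 2), the reduction to $S_1\wedge S_2$ and the coordinate-wise conclusion that exactly one of $u_{1j},u_{2j}$ is negative are both fine, but the proof stops exactly where the work begins: you assert that coupling the partial-sum inequalities $\sum_{j\in T}u_{2j}\le e_2$ over many sets $T$ with the consistency identity and the upper-boundary bounds yields an inconsistent system unless $V_1=\{0\}$ and $V_3=\{n\}$, and you yourself flag this as ``the main obstacle.'' No derivation is given, so the case is not closed; as you note, each individual inequality is compatible with the others, and it is not evident that the system you describe is actually infeasible. The paper again sidesteps the analysis: it picks $X_1\ne X_2$ of weight $I_1+1$ assigned to the two different middle anchors, constructs an $X_3$ of weight $I_2$ (possible precisely when $I_1>0$ or the top interval is not a singleton, which is where the exceptional function drops out), and uses $\tfrac12(X_1\wedge X_3)+\tfrac12(X_1\vee X_3)=\tfrac12 X_1+\tfrac12 X_3$ to violate convex-hull disjointness once more. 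Both the missing configuration and the unfinished case (i) are genuine gaps; if you want to keep the analytic route you must actually exhibit the contradiction, and in any event you must add an argument for the crossing $(2,1)$ configuration.
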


An alternative NN representation for the function in Lemma \ref{lem:3_interval} is described in Appendix \ref{apx:ex_no_interval_anchor}. 

Finally, we use a circuit theoretic lower bound for a linear threshold function to give a lower bound on the NN representation size depending on resolution in Appendix \ref{apx:lt}.

\section{Concluding Remarks}

We study the information capacity of symmetric Boolean functions in the associative computation model. The information capacity consists of two quantities: NN complexity and resolution. Namely, the number of anchors in the NN representation and the number of bits required to represent the anchors. Specifically, we prove an upper bound on the NN complexity $NN(f) \leq I(f)$ and that this bound is tight for periodic symmetric Boolean functions. The upper bound of $I(f)$ on the NN complexity is based on a new NN construction with resolution $O(\log{n})$. We prove that our upper bound on resolution is tight for some symmetric Boolean functions. Surprisingly, to achieve optimality, the construction must be asymmetric in the entries of the anchors, contrary to the properties of symmetric Boolean functions. In addition, optimal constructions in size ($NN(f) = 2$) and resolution (a constant) are given for symmetric linear threshold functions.

For any symmetric Boolean function $f$, we conjecture that $NN(f) = I(f)$. While linear threshold functions have $NN(f) = 2$, the resolution is bounded by $O(n\log{n})$. An intriguing open problem is to find NN representations for all linear threshold functions of polynomial NN complexity and logarithmic (or constant) resolution.

\section*{Acknowledgement}
This research was partially supported by the Carver Mead New Adventure Fund.

\printbibliography
\clearpage
\appendix  
\subsection{Proofs of the Theorems, Lemmas, and Propositions}
\label{apx:proofs}
\subsubsection{\textbf{\underline{Proofs for Section \ref{sec:upper_bound}}}}
\label{apx:proofs_sec_upper}
\begin{proof}[Proof of Proposition \ref{prop:nec}]
Consider two Boolean vectors $X = (x_1,\dots,0,\dots,x_n)$ and $X' = (x_1,\dots,1,\dots,x_n)$ where they only differ at $t^{th}$ location. Assume that this occurs at the boundary for an interval, i.e., $X$ and $X'$ are closer to $a_{i-1}$ and $a_i$ respectively. Then,
\begin{align}
    d(a_{i-1}, X)^2 - d(a_{i},X)^2 &< 0 \\
    d(a_{i-1}, X')^2 - d(a_{i},X')^2 &> 0
\end{align}
Subtracting both inequalities, we get $d(a_{i-1},X)^2 - d(a_{i-1},X')^2 - (d(a_i,X)^2 - d(a_i,X')^2) < 0$. Since $X$ and $X'$ differ only at the $t^{th}$ location, we get $(2a_{(i-1)t} - 1) - 2(a_{it} - 1) < 0$, hence, $a_{it} > a_{(i-1)t}$.
\end{proof}
\begin{proof}[Proof of Lemma \ref{lem:ns_cond}]
We begin by writing the simplest necessary and sufficient condition for an $I(f)$ interval-anchor assignment. Consider any two anchors $a_i$ and $a_k$ and assume that $|X| \in [I_{i-1}+1,I_{i}]$ so that $X$ is closer to $a_i$.
\begin{equation}
    d(a_i,X)^2 - d(a_k,X)^2 < 0
\end{equation}
which can be written as
\begin{equation}
    \sum_{j=1}^n (a_{ij}-x_j)^2 - \sum_{j=1}^n (a_{kj}-x_j)^2 < 0
\end{equation}
for $i \in \{2,\dots,I(f)\}$ and $k < i$. This implies
\begin{align}
    &\frac{1}{2}\sum_{j=1}^n (a_{ij}^2 - a_{kj}^2) \nonumber \\
    &\hphantom{aaaa}< \min_{X\in\{0,1\}^n : |X| \in [I_{i-1}+1 ,I_{i}]} \sum_{j=1}^n (a_{ij} - a_{kj})x_j
\end{align}
The RHS term is minimized when $|X| = \sum_{j=1}^n x_j = I_{i-1}+1$ by Proposition \ref{prop:nec} because each $a_{ij} - a_{kj} > 0$. Then,
\begin{equation}
    \label{eq:lem_1_rhs}
    \frac{1}{2}\sum_{j=1}^n (a_{ij}^2 - a_{kj}^2) < \min_{X\in\{0,1\}^n : |X| = I_{i-1}+1} \sum_{j=1}^n (a_{ij} - a_{kj})x_j
\end{equation}
We similarly do the analysis for the anchors $a_i$ and $a_k$ but this time assuming that $|X| \in [I_{k-1}+1,I_k]$ so that $X$ is closer to $a_k$. We still take $k < i$. Then,
\begin{align}
    \frac{1}{2}\sum_{j=1}^n (a_{kj}^2 - a_{ij}^2) &< \min_{X\in\{0,1\}^n : |X| \in [I_{k-1}+1 ,I_{k}]} \sum_{j=1}^n (a_{kj} - a_{ij})x_j \\
    \label{eq:lem_1_lhs}
    \frac{1}{2}\sum_{j=1}^n (a_{ij}^2 - a_{kj}^2) &> \max_{X\in\{0,1\}^n : |X| \in [I_{k-1}+1 ,I_{k}]} \sum_{j=1}^n (a_{ij} - a_{kj})x_j \nonumber \\
    &\hphantom{aa}= \max_{X\in\{0,1\}^n : |X| = I_k} \sum_{j=1}^n (a_{ij} - a_{kj})x_j 
\end{align}
Now, the RHS is maximized when $|X| = I_k$ by Proposition \ref{prop:nec}. By combining \eqref{eq:lem_1_rhs} and \eqref{eq:lem_1_lhs}, we complete the proof.
\end{proof}
\begin{proof}[Proof of Lemma \ref{lem:ns_cond_refined}]
    It is obvious that this is necessary if we apply $k = i-1$. We will show that we can use the Eq. \eqref{eq:ns_cond_refined} to obtain Lemma \ref{lem:ns_cond} by summing them telescopically until some fixed $k < i$.
\begin{align*}
    &\max_{X \in \{0,1\}^n : |X| = I_{i-1}} \sum_{j=1}^n (a_{ij} - a_{(i-1)j})x_j  \\
    &\hphantom{aaaaa}< \frac{1}{2}\sum_{j=1}^n (a_{ij}^2 - a_{(i-1)j}^2)  \\
    &\hphantom{aaaaa}< \min_{X \in \{0,1\}^n : |X| = I_{i-1}+1} \sum_{j=1}^n (a_{ij} -a_{(i-1)j})x_j \\
    &\hphantom{aaaaaaaaaa}\vdots \\
    &\max_{X \in \{0,1\}^n : |X| = I_{k}} \sum_{j=1}^n (a_{(k+1)j} - a_{kj})x_j \\
    &\hphantom{aaaa}< \frac{1}{2}\sum_{j=1}^n (a_{(k+1)j}^2 - a_{kj}^2) \\ 
    &\hphantom{aaaa}< \min_{X \in \{0,1\}^n : |X| = I_{k}+1} \sum_{j=1}^n (a_{(k+1)j} -a_{kj})x_j 
\end{align*}
Let us combine the inequalities and focus on the RHS.
\begin{equation}
     \sum_{l=k+1}^{i}
    \min_{X \in \{0,1\}^n : |X| = I_{l-1}+1} \sum_{j=1}^n (a_{lj} -a_{(l-1)j})x_j 
\end{equation}
We can replace the constraint sets of all of the minimization expressions with $I_{i-1} + 1$ because
\begin{align}
    &\min_{X \in \{0,1\}^n : |X| = I_{l-1}+1} \sum_{j=1}^n (a_{lj} -a_{(l-1)j})x_j\nonumber \\
    &\hphantom{aaaaa}< \min_{X \in \{0,1\}^n : |X| = I_{i-1}+1} \sum_{j=1}^n (a_{lj} -a_{(l-1)j})x_j
\end{align}
by Proposition \ref{prop:nec} and $l \leq i$. Then, we combine the objective functions telescopically again so that
\begin{equation}
    \frac{1}{2}\sum_{j=1}^n (a_{ij}^2 - a_{kj}^2) <
    \min_{X \in \{0,1\}^n : |X| = I_{i-1}+1} \sum_{j=1}^n (a_{ij} -a_{kj})x_j
\end{equation}
The LHS can be handled in a similar manner. Then, we can obtain Lemma \ref{lem:ns_cond} exactly.
\end{proof}
\begin{proof}[Proof of Theorem \ref{th:upper}]
Let us define a matrix $B \in \mathbb{R}^{I(f)-1 \times n}$ where $b_{ij} = a_{(i+1)j} - a_{ij}$. Rewriting the condition in Lemma \ref{lem:ns_cond_refined} using the $B$ matrix and transforming $i$ to $i+1$, we get
\begin{align}
    \label{eq:th_ns_cond}
    \max_{X \in \{0,1\}^n : |X| = I_{i}} &\sum_{j=1}^n b_{ij}x_j \nonumber \\
    &< \frac{1}{2}\sum_{j=1}^n \Bigg( -b_{ij}^2 + 2b_{ij}\sum_{k=1}^{i}b_{kj} + 2b_{ij}a_{1j}\Bigg) \nonumber \\
    &\hphantom{aaaa}< \min_{X \in \{0,1\}^n : |X| = I_{i}+1} \sum_{j=1}^n b_{ij}x_j
\end{align}
where the middle term is computed by the identity $a_{ij} = a_{1j} + \sum_{k=1}^{i-1} b_{kj} = a_{1j} -b_{ij} + \sum_{k=1}^i b_{kj}$.
\begin{align}
    \frac{1}{2}\sum_{j=1}^n (&a_{(i+1)j}^2 - a_{ij}^2) = \frac{1}{2}\sum_{j=1}^n b_{ij}(b_{ij}+2a_{ij}) \\
    &= \frac{1}{2}\sum_{j=1}^n\Bigg(b_{ij}^2 + 2b_{ij}\Big(a_{1j} - b_{ij} + \sum_{k=1}^{i} b_{kj}\Big)\Bigg) \\
    &= \frac{1}{2}\sum_{j=1}^n\Bigg(-b_{ij}^2 + 2b_{ij}\sum_{k=1}^{i} b_{kj} + 2b_{ij}a_{1j}\Bigg)
\end{align}
Let us take a convex combination of the LHS and RHS with some $\lambda_i \in (0,1)$ to make it equal to the middle term for each $i \in \{1,\dots,I(f)-1\}$. Therefore, we want to solve $a_1 = (a_{11},\dots,a_{1n})$ for the $Ba_1 = c$ where
\begin{align}
    \label{eq:bac}
    (Ba_1)_i = c_i = &\frac{1}{2}\sum_{j=1}^n b_{ij}^2 - \sum_{j=1}^n b_{ij}\sum_{k=1}^{i}b_{kj} \nonumber \\ 
    &+ \lambda_i \max_{X\in\{0,1\}^n : |X|=I_{i}} \sum_{j=1}^n b_{ij}x_j \nonumber \\
    &+ (1-\lambda_i)\min_{X\in\{0,1\}^n : |X|=I_{i}+1} \sum_{j=1}^n b_{ij}x_j
\end{align}
As long as this system of linear equations is consistent, we have a solution for $a_1$ and a construction for $I(f)$ interval-anchor assignment. If $B$ is a full row rank matrix, the system is guaranteed to be consistent. As long as the LHS is smaller than the RHS in Eq. \eqref{eq:th_ns_cond}, any full row rank $B$ matrices will work. Therefore, we can claim that $B = \mathds{1} + \epsilon M$ for sufficiently small $\epsilon > 0$ and full row rank $M$ can be used to construct an anchor matrix for an arbitrary symmetric Boolean function.
\end{proof}
\subsubsection{\textbf{\underline{Proofs for Section \ref{sec:lower_bound}}}}
\label{apx:proofs_sec_lower}
\begin{proposition}
    \label{prop:sym_nn_equivalence}
    Suppose that an anchor matrix $A \in \mathbb{R}^{m\times n}$ is an NN representation with $m$ anchors for an $n$-input symmetric Boolean function. Then, any permutation of rows and columns of $A$ is an NN representation for the same function.
\end{proposition}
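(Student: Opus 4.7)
The plan is to split the claim into two independent pieces, since a general row/column permutation of $A$ is the composition of a row permutation and a column permutation. Row permutations are essentially free: an NN representation is determined by the underlying set of anchors together with its partition into $P$ and $N$, with no reference to the order in which they are listed as rows of $A$. So permuting the rows produces exactly the same geometric classifier.

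The substantive case is a column permutation by some $\pi \in S_n$. Let $A'$ be the matrix with $a'_{ij} = a_{i\pi(j)}$; I would show that the label assigned to any $X \in \{0,1\}^n$ by $A'$ equals $f(X)$. The key observation is that Euclidean distance commutes with simultaneous coordinate permutation. Reindexing the sum with $k = \pi(j)$ yields
\begin{equation}
d(X, a'_i)^2 \;=\; \sum_{j=1}^n (x_j - a_{i\pi(j)})^2 \;=\; \sum_{k=1}^n (x_{\pi^{-1}(k)} - a_{ik})^2 \;=\; d(X^{\pi^{-1}}, a_i)^2,
\end{equation}
where $X^{\pi^{-1}}$ denotes the vector whose $k$th coordinate is $x_{\pi^{-1}(k)}$. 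This identity holds with the same $\pi^{-1}$ uniformly across all anchor indices $i$, so the entire ordering of distances from $X$ to the rows of $A'$ coincides with the ordering of distances from $X^{\pi^{-1}}$ to the rows of $A$.

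From this I conclude that the nearest anchor to $X$ in $A'$ is positive if and only if the nearest anchor to $X^{\pi^{-1}}$ in $A$ is positive. By hypothesis the latter is equivalent to $f(X^{\pi^{-1}}) = 1$, and since $f$ is symmetric (so it depends only on $|X^{\pi^{-1}}| = |X|$), this is equivalent to $f(X) = 1$. Hence $A'$ represents the same function, and combining with the row-permutation case finishes the proposition.

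I do not expect a real obstacle; this is a bookkeeping statement whose content is that Euclidean distance is invariant under simultaneous coordinate permutation and that symmetric Boolean functions are invariant under coordinate permutation by definition. The only care required is to track $\pi$ versus $\pi^{-1}$ consistently in the reindexing step above, so that the implicit permutation applied to $X$ on the anchor side lines up with the permutation applied to the coordinates of the anchors.
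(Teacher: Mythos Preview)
Your proposal is correct and follows essentially the same approach as the paper: both handle row permutations trivially and, for column permutations, observe that permuting the anchor coordinates is equivalent to applying the inverse permutation to the input vector (the paper phrases this via $A(PX) = (AP)X$ and $\diag((AP)(AP)^T) = \diag(AA^T)$, while you do it by direct reindexing of the distance sum), then invoke the symmetry of $f$.
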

\begin{proof}[Proof of Proposition \ref{prop:sym_nn_equivalence}]
    Permutation of rows is trivial and can be done for any NN representation. For the columns, we can use the definition of symmetric Boolean functions where $f(X) = f(\sigma(X))$ for any permutation $\sigma(.)$. The closest anchor to a Boolean vector $X$ can be found by
    \begin{align}
        \arg\min_{i} d(a_i,X)^2 &= \arg\min_{i} |X| - 2(AX)_i + ||a_i||_2^2 \\
        &= \arg\max_{i} \Big(2AX - \diag(AA^T)\Big)_i
    \end{align}
where $||.||_2$ denotes the Euclidean norm and $\diag(M)$ is the all-zero matrix except the diagonal entries of $M$. Let $P \in \{0,1\}^{n\times n}$ be a permutation matrix. Then,
    \begin{equation}
        \label{eq:permutation_eq}
        \arg\max_{i} \Big(2A(PX) - \diag(AA^T)\Big)_i
    \end{equation}
is an anchor index assigned to the same anchor type (either positive or negative) because $f$ is symmetric. Note that Eq. \eqref{eq:permutation_eq} is equivalent to
    \begin{equation}
    \begin{split}
        \arg\max_{i} \Big(2(AP)X &- \diag((AP)(AP)^T)\Big)_i \\
        &= \arg\max_{i} \Big(2(AP)X - \diag(AA^T)\Big)_i
    \end{split}
    \end{equation}
\end{proof}
\begin{proof}[Proof of Proposition \ref{prop:lower_bound}]
    We use a similar idea used for Proposition \ref{prop:nec}. Assume that there is a vector $X = (0,\dots,0,x_{T+1},\dots,x_n)$ assigned to $a_1$ for contradiction. Let $X' = (1,\dots,1,x_{T+1},\dots,x_n)$. Then,
    \begin{align}
        \label{eq:first_cond_lb}
        d(a_1,X)^2 &< d(a_i,X)^2 \hphantom{'aa} \forall i \geq 2 \\
        \label{eq:sec_cond_lb}
        d(a_1,X')^2 &> d(a_i,X')^2 \hphantom{aa} \text{for some } i \geq 2
    \end{align}
    We get Eq. \eqref{eq:first_cond_lb} by the assumption for contradiction. Eq. \eqref{eq:sec_cond_lb} is obtained by the fact that $f(X) \neq f(X')$ because we have a jump of length $T$ for a given value of $|X|$. Subtracting both, we get
    \begin{equation}
        \sum_{j=1}^T a_{1j} < \sum_{j=1}^T a_{ij}
    \end{equation}
    for some $i \geq 2$, which is the desired contradiction.
\end{proof}
\begin{proof}[Proof of Theorem \ref{th:per_lb}]
    Let $f$ be a periodic symmetric Boolean function with period $T$ so that $n = I(f)T$. Suppose that the NN representation for this function has an anchor matrix $A \in \mathbb{R}^{m \times n}$ where
    \begin{equation}
        \sum_{j=(k-1)T+1}^{kT} a_{kj} \geq \sum_{j=(k-1)T+1}^{kT} a_{ij}
    \end{equation}
    for any $k \in \{1,\dots,m-1\}$ and $k < i$. This holds without loss of generality because we can rank the maximal sums of the $k$ entries of the anchors and rearrange the rows and columns of $A$ by Proposition \ref{prop:sym_nn_equivalence}. 
    
    Iteratively, for each $k \in \{1,\dots,m-1\}$, we see that if $X_i = 0$ for $i \in \{1,\dots,kT\}$, then $(a_1,\dots,a_k)$ cannot be assigned to $X$ by Proposition \ref{prop:lower_bound}. Since $n = I(f)T$ for a periodic symmetric Boolean function, if $m < I(f)$, there will remain $X$ vectors with different $f(X)$ values assigned to a single anchor, leading to a contradiction.
\end{proof}
\subsubsection{\textbf{\underline{Proofs for Section \ref{sec:resolution}}}}
\label{apx:proofs_sec_res}
\begin{theorem}[\cite{meyer1973generalized}]
    \label{th:pseudo}
    Suppose that we are given a matrix $A \in \mathbb{R}^{m\times n}$ and column vectors $c \in \mathbb{R}^m$,$d \in \mathbb{R}^n$. Then, if $c$ is in the column space of $A$ and the quantity $\beta = 1 + d^T A^+ c \neq 0$, we have
    \begin{align}
        (A + cd^T)^+ = A^+ &+ \frac{1}{\beta} v^T k^T A^+ \nonumber \\
        &- \frac{(||k||^2 v^T + k\beta) (||v||^2 k^T A^+ + h\beta)}{\beta (||k||^2||v||^2 + \beta^2)}
    \end{align}
    where $v = d^T(I - A^+A)$, $k = A^+ c$, and $h = d^T A^+$.
\end{theorem}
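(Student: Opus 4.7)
The plan is to prove Theorem \ref{th:pseudo} by directly verifying the four Moore--Penrose conditions for the claimed expression. Setting $M := A + cd^T$ and letting $X$ denote the right-hand side of the displayed formula, I would check $MXM = M$, $XMX = X$, $(MX)^T = MX$, and $(XM)^T = XM$, from which $X = M^+$ follows by uniqueness of the pseudoinverse.

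The first step is to extract the structural identities forced by the hypotheses. Since $c \in \mathrm{col}(A)$ and $k = A^+ c$, we have $AA^+ c = c$, hence $Ak = c$ and $M = A(I_n + kd^T)$. The row vector $v = d^T(I_n - A^+ A)$ satisfies $vA = 0$; combined with the symmetry of $A^+A$ this gives $Av^T = 0$, $vA^+ = 0$, and the scalar identity $vk = vA^+ c = 0$. I would also record the standing Moore--Penrose identities $AA^+A = A$, $A^+AA^+ = A^+$, $(AA^+)^T = AA^+$, $(A^+A)^T = A^+A$, together with the scalars $k^Tk = \|k\|^2$, $vv^T = \|v\|^2$, and $d^Tk = \beta - 1$. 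These are the rewrite rules driving all subsequent cancellations.

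Next I would split $X$ as $X = A^+ + Y - Z$ with $Y$ the middle term and $Z$ the last, and expand $MX$ and $XM$ by distributing $M = A + cd^T$ over the three pieces of $X$. Of the nine resulting products, every term of the form $A(v^T \cdot)$ vanishes by $Av^T = 0$, and every term of the form $(\cdot v)A$ vanishes by $vA = 0$; what survives collapses, after substituting $Ak = c$ and $d^Tk = \beta - 1$, into $MX = AA^+ + (\text{rank-one correction along } c)$ and $XM = A^+ A + (\text{rank-one correction along } v^T)$. The specific denominator $\beta(\|k\|^2\|v\|^2 + \beta^2)$ appearing in $Z$ is precisely what is needed to make both of these products manifestly symmetric, which yields conditions (iii) and (iv).

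For (i) $MXM = M$ and (ii) $XMX = X$ I would multiply the cleaned expressions for $MX$ and $XM$ by $M$ and by $X$ respectively; the numerator of $Z$ is a rank-one outer product whose left and right factors are tuned so that the unwanted contributions of $Y$ and of the $A^+$-term cancel against it exactly. The main obstacle is purely combinatorial bookkeeping: four scalar denominators, several outer products, and a normalising constant whose role only becomes visible at the last step. To keep the computation manageable, I would first dispatch the degenerate subcase $v = 0$ (equivalently $d \in \mathrm{row}(A)$), where the formula collapses to the classical Sherman--Morrison-style update $X = A^+ - \beta^{-1} k h$; this already exhibits the core cancellations. The general case $v \neq 0$ then amounts to adding the rank-one correction along $v^T$ that is exactly what the fourth Moore--Penrose condition requires when $d$ lies outside $\mathrm{row}(A)$.
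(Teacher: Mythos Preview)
The paper does not prove Theorem~\ref{th:pseudo} at all: it is quoted from \cite{meyer1973generalized} and used as a black box in the proof of Theorem~\ref{th:sym_res}. There is therefore no proof in the paper to compare your proposal against. Your plan---verify the four Moore--Penrose conditions for the claimed expression and invoke uniqueness---is the standard route for rank-one pseudoinverse update formulas and is essentially how Meyer establishes them in the cited source, so the approach is sound.

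One small slip to fix before executing the computation: the identity ``$vA = 0$'' is dimensionally ill-formed, since $v$ is $1\times n$ and $A$ is $m\times n$. What you actually need (and correctly state immediately after) is $Av^T = 0$, which follows from $A(I_n - A^+A) = 0$ together with the symmetry of $A^+A$; the companion identity $vA^+ = 0$ follows from $(I_n - A^+A)A^+ = 0$. With those in hand, your bookkeeping plan---kill every $A(v^T\cdot)$ and $(\cdot\, v)A^+$ term, substitute $Ak = c$ and $d^Tk = \beta - 1$, and first dispatch the degenerate case $v=0$---is exactly the right way to keep the nine-term expansion under control.
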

\begin{proof}[Proof of Theorem \ref{th:sym_res}]
    To apply Theorem \ref{th:pseudo}, we pick $c = \mathds{1}_{m}$ and $d = \mathds{1}_n$ where $\mathds{1}_k$ denotes the all-one vector in $\mathbb{R}^k$ for an integer $k$. Also, pick $A = \epsilon I_{m,n}$. Therefore, $A^+ = \frac{1}{\epsilon}I_{m,n}^T$.
    
    We compute the pseudoinverse of $\mathds{1}_{m\times n} + \epsilon I_{m,n}$ as the following closed form expression.
    \begin{equation}
        \label{eq:closed_from_pseudo}
        \Big(\mathds{1}_{m\times n} + \epsilon I_{m,n}\Big)^+ =
        \begin{bmatrix}
            \frac{1}{\epsilon} I_{m\times m} - \frac{1 + n/\epsilon}{m(n-m)+(m+\epsilon)^2}\mathds{1}_{m\times m} \\
            \frac{1}{m(n-m)+(m+\epsilon)^2}\mathds{1}_{(n-m)\times m}
        \end{bmatrix}
    \end{equation}
    Since $m \leq n$ and $\epsilon$ is a rational constant, the denominator of all entries of the pseudoinverse is a function of $O(n^2)$ (and $O(n)$ for the numerator). Therefore, the resolution of the entries of $B^+$ is $O(\log{n})$.
\end{proof}
\begin{proof}[Proof of Corollary \ref{cor:res}]
    For the matrix $B = \mathds{1}_{m\times n} + \epsilon M$ in Theorem \ref{th:upper} with $m = I(f)-1$, we pick $\epsilon = 1/2$ and $M = I_{m,n}$ where $I_{m,n}$ is a submatrix of $n\times n$ identity matrix with the first $m$ rows. It can be easily verified that this $B$ matrix where $b_{ij} = a_{ij} - a_{(i-1)j}$ satisfies the necessary and sufficient condition \eqref{eq:ns_cond_refined} for the interval-anchor assignment. The first anchor $a_1$ can be obtained by $B^+c$ where $c$ is given in Eq. \eqref{eq:bac}.
    
    With $\lambda_i = 1/2$ for all $i \in \{1,\dots,m\}$, each entry of $c$ depends on the entries on $B$ and $n$ polynomially. By Theorem \ref{th:sym_res}, $B^+$ has entries with at most $O(\log{n})$ resolution and therefore, we see that the entries of $a_1 = B^+c$ has at most $O(\log{n})$ resolution (consequently, $a_2,\dots,a_m$). In conclusion, we can always obtain an interval-anchor assignment with $O(\log{n})$ resolution.
\end{proof}
\begin{proof}[Proof of Theorem \ref{th:2-anchor}]
    Geometrically, any $2$-anchor NN representation construction of linear threshold functions can be written in the following form for a real number $c > 0$ and an arbitrary $x^* \in \mathbb{R}^n$ such that $w^T x^* = b'$ where $b' = b - 0.5$.
    \begin{align}
        a_1 &= x^* - cw \\
        a_2 &= x^* + cw
    \end{align}
    This can be seen algebraically as well. We perturb $b$ to $b'$ to consider the points on the hyperplane itself. We first claim that there exists $x^*$ such that $RES(x^*) \leq RES(w)$.
    
    Since $f(X)$ is not constant, there is always a pair of binary vectors $X'$ and $X''$ such that $w^T X' < b' < w^T X''$ where $X_i' = X_i''$ for all $i \in \{1,\dots,n\}$ except for a unique $i = k \in \mathbb{Z}$. Then, we construct
    \begin{align}
        x_i^* &= X_i' \text{ for } i \neq k \\
        x_k^* &= \frac{b' - w^T X'}{w_k}
    \end{align}
    Clearly, $x_i^*$s are binary except $i = k$ where $|b' - w^TX'| < w_k$ and therefore, $RES(x_k^*) = \lceil\log_2{w_k+1}\rceil$. In conclusion, $RES(x^*) = \lceil\log_2{w_k+1}\rceil \leq RES(w)$.

    Picking $c = 1$, we see that $RES(A) = O(RES(w))$. Moreover, since $w_i = 2^{O(n\log{n})}$ for $i \in \{1,\dots,n\}$ in general \cite{alon1997anti,haastad1994size,muroga1971threshold}, we conclude that $RES(A) = O(n\log{n})$.
\end{proof}
\begin{proof}[Proof of Theorem \ref{th:res_lb}]
For the $n$-input symmetric Boolean function given in the Theorem, we always have an interval-anchor assignment by Lemma \ref{lem:3_interval}. For an interval-anchor assignment, let $B \in \mathbb{R}^{2\times n}$ $b_{ij} = a_{(i+1)j}-a_{ij} > 0$ for $i \in \{1,2\}$ and $j \in \{1,\dots,n\}$. Then, let us write the necessary and sufficient conditions by Lemma \ref{lem:ns_cond_refined}. Without loss of generality, assume that $0 < b_{11} \leq b_{12} \leq \dots \leq b_{1n}$ as we can always reorder the columns of $B$. Then, the first condition will be
\begin{align}
    &\max_{X\in\{0,1\}^n:|X| = \lfloor n/2 \rfloor} \sum_{j=1}^n b_{1j}x_j < \frac{1}{2} \sum_{j=1}^n b_{1j}(b_{1j} + 2a_{1j})\nonumber \\
    \label{eq:res_nec}
    &\hphantom{aaaaaaaaaaa}< \min_{X\in\{0,1\}^n:|X| = \lfloor n/2 \rfloor + 1} \sum_{j=1}^n b_{1j}x_j 
\end{align}
We have two important inequalities based on this condition.
\begin{align}
    \label{eq:first_b_bound}
b_{1(n-\sqrt{n})} &< b_{1(\sqrt{n}+1)} + \frac{1}{\sqrt{n}}b_{11} \\
    \label{eq:second_b_bound}
b_{1n} &< 2b_{1\sqrt{n}} \leq 2b_{1(\sqrt{n}+1)}
\end{align}
To prove Eq. \eqref{eq:first_b_bound}, we assume the contrary such that $b_{1(n-\sqrt{n})} \geq b_{1(\sqrt{n}+1)} + \frac{1}{\sqrt{n}}b_{11}$. Also, $b_{1(n-\sqrt{n}+j)} \geq b_{1(n-\sqrt{n})}$ for $1 \leq j \leq \sqrt{n}$ and $b_{1(\sqrt{n}+1)} \geq b_{1j}$ for $1 \leq j \leq \sqrt{n}+1$. Hence, if we sum over $j \in \{1,\dots,\sqrt{n}\}$, we get
\begin{equation}
    \sum_{j=1}^{\sqrt{n}} b_{1(n-\sqrt{n}+j)} \geq \sum_{j=1}^{\sqrt{n}} \Bigg(b_{1(j+1)} + \frac{1}{\sqrt{n}} b_{11}\Bigg)
\end{equation} 
Let us add $b_{1j}$ for $j \in \{n - \lfloor n/2 \rfloor + 1,\dots,n-\sqrt{n}\}$ to both sides. Then, we get
\begin{equation}
    \sum_{j=n - \lfloor n/2 \rfloor + 1}^{n} b_{1j} \geq \sum_{j=1}^{\sqrt{n}} \Bigg(b_{1(j+1)} + \frac{1}{\sqrt{n}} b_{11}\Bigg) + \sum_{j=n - \lfloor n/2 \rfloor + 1}^{n-\sqrt{n}} b_{1j}
\end{equation}
where the LHS has $\lfloor n/2 \rfloor$ many terms and the RHS has $\lfloor n/2 \rfloor + 1$ many terms. This contradicts Eq. \eqref{eq:res_nec}.

Similarly, to prove Eq. \eqref{eq:second_b_bound}, we assume the contrary such that $b_{1n} \geq 2b_{1\sqrt{n}} \geq b_{11} + b_{12}$. If we add both sides $\sum_{i=n - \lfloor n/2 \rfloor + 1}^{n-1} b_{1i}$, we get 
\begin{equation}
    \sum_{i=n - \lfloor n/2 \rfloor + 1}^{n} b_{1i} \geq b_{11} + b_{12} + \sum_{i=n - \lfloor n/2 \rfloor + 1}^{n-1} b_{1i} 
\end{equation}
where the LHS has $\lfloor n/2 \rfloor$ many terms and the RHS $\lfloor n/2 \rfloor +1$ many terms, contradicting Eq. \eqref{eq:res_nec}.

We now want to bound the middle term in Eq. \eqref{eq:res_nec}. We divide the sum in three parts. Essentially, we want to show that the main contribution in the value of the whole summation is due to the middle term.
\begin{align}
    \sum_{j=1}^{\sqrt{n}} b_{1j}(b_{1j}+2a_{1j}) &+ \sum_{j=\sqrt{n}+1}^{n-\sqrt{n}} b_{1j}(b_{1j}+2a_{1j}) + \nonumber\\
    \label{eq:three_sum_part}
    &\hphantom{aaaa} \sum_{j=n-\sqrt{n}+1}^{n} b_{1j}(b_{1j}+2a_{1j})
\end{align}
We also divide the middle term in Eq. \eqref{eq:three_sum_part} into two parts depending on whether the $b_{1j}+2a_{1j}$ terms are positive or not. Let $\mathcal{J}^+$ denote the indices $j$ where $b_{1j}+2a_{1j} > 0$ and $\mathcal{J}^-$ otherwise. Then, we define $\mathcal{S}^+ = \sum_{j \in \mathcal{J}^+} (b_{1j}+2a_{1j})$
and $\mathcal{S}^- = \sum_{j \in \mathcal{J}^-} (b_{1j}+2a_{1j})$ so that 
\begin{align}
    \label{eq:mid_term_bound}
    \sum_{j=\sqrt{n}+1}^{n-\sqrt{n}} &b_{1j}(b_{1j}+2a_{ij}) > b_{1(\sqrt{n}+1)}\mathcal{S}^+ + b_{1(n-\sqrt{n})}\mathcal{S}^-  \\
    \label{eq:mid_term_bound_2}
    &>b_{1(\sqrt{n}+1)}\Big(\mathcal{S}^+ + \mathcal{S}^-\Big) - \sqrt{n}2^{r+1}b_{1(\sqrt{n}+1)}
\end{align}
where $r$ denotes the resolution of the representation. If $r$ is the resolution of the representation (and hence, the anchor matrix $A$), it is clear that $|b_{1j}+2a_{1j}| = |a_{1j} + a_{2j}| \leq |a_{1j}| + |a_{2j}| < 2^{r+1}$ and we can rewrite Eq. \ref{eq:first_b_bound} so that $b_{1(n-\sqrt{n})} < b_{1(\sqrt{n}+1)}\Big(1+\frac{1}{\sqrt{n}}\Big)$. We also have the loose bound $\mathcal{S}^- > -n2^{r+1}$ and by using these, we can obtain Eq. \eqref{eq:mid_term_bound_2}.

We find lower bounds for the first term in Eq. \eqref{eq:three_sum_part} by $-b_{1(\sqrt{n}+1)}2^{r+1}\sqrt{n}$ and the third term by $-b_{1n}2^{r+1}\sqrt{n} > -2\sqrt{n}b_{1(\sqrt{n}+1)}2^{r+1}$.

We combine everything and the upper bound in Eq.\eqref{eq:res_nec} should hold for the expression that we obtain. We again use Eq. \eqref{eq:first_b_bound}.
\begin{align}
    b_{1(\sqrt{n}+1)}&\Big(\mathcal{S}^+ + \mathcal{S}^- - \sqrt{n}2^{r+3}\Big) \nonumber \\ 
    &\hphantom{aaa}< 2\sum_{j=1}^{\lfloor n/2\rfloor +1} b_{1j} < 2\sum_{j=1}^{\lfloor n/2\rfloor + 1} b_{1(n-\sqrt{n})} \nonumber \\ 
    &\hphantom{aaa} < 2b_{1(\sqrt{n}+1)} \Big(\lfloor n/2 \rfloor +1\Big)\Big(1+\frac{1}{\sqrt{n}}\Big)\\
    \mathcal{S}^+ + \mathcal{S}^- &< n + O(\sqrt{n}) + \sqrt{n}2^{r+3}
\end{align}

Let us rewrite $\sum_{j=1}^n (a_{1j} + a_{2j}) = \sum_{j=1}^n (b_{1j} + 2a_{1j}) =  \mathcal{S}^+ + \mathcal{S}^- + \sum_{j=1}^{\sqrt{n}} (b_{1j}+2a_{1j}) + \sum_{j=n-\sqrt{n}+1}^{n} (b_{1j}+2a_{1j})$. Then,
\begin{align}
    \label{eq:s_bound}
    \sum_{j=1}^n (a_{1j} + a_{2j}) < \Big(\mathcal{S}^+ + \mathcal{S}^- + 2\sqrt{n}2^{r+1}\Big)
\end{align}
by $|b_{1j}+2a_{1j}| < 2^{r+1}$. Let $r < c\log_2{n}$ for some constant $0 < c < 1/2$. Clearly, the RHS of Eq.\eqref{eq:s_bound} is less than $n + O(n^{\epsilon})$ where $1/2 < \epsilon < 1$ is a constant.

Similarly, we obtain the corresponding lower bound and use the other necessary and sufficient condition to prove another inequality corresponding to Eq. \eqref{eq:nec_suf_2}. For a constant $1/2 < \epsilon < 1$, we have
\begin{align}
    \label{eq:nec_suf_1}
    n - O(n^\epsilon) &< \sum_{j=1}^n (a_{1j}+a_{2j}) < n + O(n^\epsilon) \\
    \label{eq:nec_suf_2}
    n - O(n^\epsilon) &< \sum_{j=1}^n (a_{3j} + a_{2j}) < n + O(n^\epsilon)
\end{align}
Subtracting both, we get
\begin{align}
    -O(n^\epsilon) < \sum_{j=1}^n a_{3j} - a_{1j} < O(n^\epsilon)
\end{align}
and therefore, $0 < \frac{1}{2^{2r}} \leq |a_{3j}-a_{1j}| < \frac{O(n^\epsilon)}{n}$ for $i \in \{1,2\}$ and some $j \in \{1,\dots,n\}$. Hence, $RES(A) = r = \Omega(\log{n})$.
\end{proof}
\begin{proposition}
    \label{prop:convex}
    For an arbitrary NN representation of size at least $2$, pick any positive and negative anchor, $a$ and $b$. Let $\mathcal{X}$ and $\mathcal{Y}$ be the set of binary vectors closest to $a$ and $b$, respectively. Then, $conv(\mathcal{X}) \cap conv(\mathcal{Y}) = \emptyset$ where $conv(\mathcal{A})$ denotes the convex hull of a set $\mathcal{A}$.
\end{proposition}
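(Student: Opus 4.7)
The plan is to show that the perpendicular bisector of the segment joining $a$ and $b$ is a hyperplane that strictly separates $\mathcal{X}$ from $\mathcal{Y}$, and then to extend the separation to the convex hulls by linearity.

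First, I would translate the nearest neighbor condition into a linear inequality. For any $x \in \mathcal{X}$, the anchor $a$ is strictly closer to $x$ than any other anchor in the representation; in particular, $d(a,x) < d(b,x)$. Squaring and expanding gives
\begin{equation*}
    \|a\|_2^2 - 2 a^T x < \|b\|_2^2 - 2 b^T x,
\end{equation*}
which rearranges to the strict linear inequality
\begin{equation*}
    (b-a)^T x < \tfrac{1}{2}\bigl(\|b\|_2^2 - \|a\|_2^2\bigr) \eqqcolon \tau.
\end{equation*}
Note $a \neq b$ because $a$ and $b$ have opposite labels, so $b - a \neq 0$ and $H \coloneqq \{z \in \mathbb{R}^n : (b-a)^T z = \tau\}$ is a genuine hyperplane. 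Symmetrically, for any $y \in \mathcal{Y}$ the condition $d(b,y) < d(a,y)$ gives $(b-a)^T y > \tau$.

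Next, I would promote the separation from points to convex hulls. The map $z \mapsto (b-a)^T z$ is affine, so it preserves convex combinations; hence for any point $u = \sum_i \lambda_i x_i$ with $x_i \in \mathcal{X}$, $\lambda_i \geq 0$, and $\sum_i \lambda_i = 1$, we have $(b-a)^T u = \sum_i \lambda_i (b-a)^T x_i < \tau$ because each term satisfies the strict inequality and $\sum_i \lambda_i = 1$. Analogously, every $v \in \mathrm{conv}(\mathcal{Y})$ satisfies $(b-a)^T v > \tau$.

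Consequently, $\mathrm{conv}(\mathcal{X})$ lies in the open half-space $\{z : (b-a)^T z < \tau\}$ while $\mathrm{conv}(\mathcal{Y})$ lies in the disjoint open half-space $\{z : (b-a)^T z > \tau\}$, so their intersection is empty. I do not foresee a real obstacle: the argument is essentially a specialization of the separating hyperplane property of Voronoi cells to the binary points assigned to two specific anchors, and the convex hull step is immediate from affinity. The only subtle point is to notice that the strict inequalities for individual $x_i \in \mathcal{X}$ (which come from $b$ being a distinct anchor in the representation, not necessarily the second-closest) indeed yield the desired strict separation, and that the separation is preserved under convex combinations because the inequality is strict on every generator.
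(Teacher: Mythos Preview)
Your argument is correct and is essentially the same as the paper's: both exploit that the perpendicular bisector of $a$ and $b$ strictly separates $\mathcal{X}$ from $\mathcal{Y}$, and extend this to convex combinations by linearity of the functional $z \mapsto d(a,z)^2 - d(b,z)^2 = \|a\|_2^2 - \|b\|_2^2 - 2(a-b)^T z$. The paper phrases it as a contradiction from assuming $u=v$ with $u \in \mathrm{conv}(\mathcal{X})$ and $v \in \mathrm{conv}(\mathcal{Y})$, whereas you phrase it directly in terms of open half-spaces, but the content is identical.
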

\begin{proof}[Proof of Proposition \ref{prop:convex}]
    Let $|\mathcal{X}| = k$ and $|\mathcal{Y}| = l$. We assume that $\mathcal{X} = \{X_1,\dots,X_k\}$ and $\mathcal{Y} = \{Y_1,\dots,Y_l\}$. Also, let $u$ and $v$ be arbitrary convex combinations for the sets $\mathcal{X}$ and $\mathcal{Y}$. That is, $u = \sum_{i=1}^{k} \lambda_i X_i$ and $v = \sum_{i=1}^{l} \mu_i Y_i$ where $X_i \in \mathcal{X}$, $Y_i \in \mathcal{Y}$, $\lambda_i \in [0,1]$, $\sum_{i=1}^k\lambda_i = 1$, $\mu_i \in [0,1]$, and $\sum_{i=1}^l \mu_i = 1$.

    We know that $d(a,X_i)^2 < d(b,X_i)^2$ for $i \in \{1,\dots,k\}$ and $d(a,Y_i)^2 > d(b,Y_i)^2$ for $i \in \{1,\dots,l\}$. Suppose for contradiction that there is a set of $\lambda$s and $\mu$s so that $u = v$.
    \begin{align}
        d(a,u)^2 &- d(b,u)^2 = ||a||^2 - ||b||^2 - 2(a-b)^T \sum_{i=1}^k\lambda_i X_i \\
        &= \sum_{i=1}^k \lambda_i \Big(||a||^2 - 2(a-b)^T X_i -||b||^2\Big) \\
        &= \sum_{i=1}^k \lambda_i \Big(d(a,X_i)^2 - d(b,X_i)^2\Big) < 0
    \end{align}
    where we use $\lambda_i \geq 0$ and $\sum_{i=1}^k \lambda_i = 1$. We do the same for $v$ since $u = v$, to obtain
    \begin{equation}
        d(a,v)^2 - d(b,v)^2 = \sum_{i=1}^l \mu_i \Big(d(a,Y_i)^2 - d(b,Y_i)^2\Big) > 0
    \end{equation}
    resulting in a contradiction.
\end{proof}
\begin{proof}[Proof of Lemma \ref{lem:3_interval}]
    There are only two possible anchor assignments for symmetric Boolean functions with $3$ intervals besides an interval-anchor assignment: Either $a_2$ is assigned to the region $|X| \in [I_1 +1, I_2]$ and $a_1$ \& $a_3$ shares the rest (\textbf{Case 1}) or $a_2$ \& $a_3$ shares the same region $|X| \in [I_1+1,I_2]$ and $a_1$ is assigned to the rest (\textbf{Case 2}).
    
    \textbf{For Case 1:} We pick $X_1$ and $X_2$ assigned either both to $a_1$ or $a_3$ where $|X_1| = t_1 < I_1 + 1$ and $|X_2| = t_2 > I_2$. Consider also an integer $t \in [I_1+1,I_2]$ independently. This is always possible except the function given in the description. However, for that function, this case implies an interval-anchor assignment as there are unique vectors both for $|X| = 0$ and $|X| = n$. We denote the set of coordinates where $X_1 = X_2 = 1$ by $\mathcal{S}_1$, $X_1 = 1, X_2 =0$ by $\mathcal{S}_2$, $X_1 = 0, X_2 = 1$ by $\mathcal{S}_3$, and finally, $X_1 = X_2 = 0$ by $\mathcal{S}_4$. Clearly, $t_1 = |\mathcal{S}_1| + |\mathcal{S}_2|$, $t_2 = |\mathcal{S}_1| + |\mathcal{S}_3|$, $|\mathcal{S}_1| + |\mathcal{S}_2| + |\mathcal{S}_3| + |\mathcal{S}_4| = n$. Consider the following as an example.
    \begin{align}
        \label{eq:s_representation_1}
        X_1 = (1,\dots,1,1,\dots,1,0,\dots,0,0,\dots,0) \\
        \label{eq:s_representation_2}
        X_2 = (\underbrace{1,\dots,1}_{\mathcal{S}_1},\underbrace{0,\dots,0}_{\mathcal{S}_2},\underbrace{1,\dots,1}_{\mathcal{S}_3},\underbrace{0,\dots,0}_{\mathcal{S}_4})
    \end{align}
    Note that the example representation given in Eq. \eqref{eq:s_representation_1} and \eqref{eq:s_representation_2} can be assumed without loss of generality by the reordering of indices. Also, $\mathcal{S}_3 \neq \emptyset$ is necessary by the choice of $t_1,t_2$, and $t$. It is also clear that $|\mathcal{S}_3| = t_2 - t_1 + |\mathcal{S}_2| \geq t - t_1$ and $|\mathcal{S}_3| = t_2 - |\mathcal{S}_1| \geq t - |\mathcal{S}_1|$.
    
    It can be easily verified that the convex combination $X' = \frac{t_2 - t}{t_2 - t_1} X_1 + \frac{t-t_1}{t_2-t_1}X_2$ lies on the hyperplane $|X| = t$. We further claim that it is in the convex hull of the set of binary vectors $X$s where $|X| = t$.
    
    Let $Y$ be the average of all binary vectors on the hyperplane $|X| = t$ with $t_1$ many $1$s at the locations of $1$s of $X_1$, $S_4$ many $0$s at the location of $0$s of $X_1$. That is, we put the remaining $t-t_1$ many ones to the indices in $\mathcal{S}_3$. It is clear that there are $\binom{S_3}{t-t_1}$ many such vectors. The average value in the $\mathcal{S}_3$ region is  $K = \binom{|\mathcal{S}_3|-1}{t-t_1-1} / \binom{|\mathcal{S}_3|}{t-t_1} = (t-t_1)/|\mathcal{S}_3|$.

    Similarly, we define $Z$ as the average of all binary vectors on the hyperplane $|X| = t$ with all $1$s for the indices in $\mathcal{S}_1$ and the $t-|\mathcal{S}_1| = $ remaining $1$s will be distributed in the region $\mathcal{S}_3$. The average value in the $\mathcal{S}_3$ region is  $L = \binom{|\mathcal{S}_3|-1}{t-|\mathcal{S}_1|-1} / \binom{|\mathcal{S}_3|}{t-|\mathcal{S}_1|} = (t-|\mathcal{S}_1|)/|\mathcal{S}_3|$.
    \begin{align}
        Y &= (1,\dots,1,1,\dots,1,K,\dots,K,0\dots,0) \\
        Z &= (\underbrace{1,\dots,1}_{\mathcal{S}_1},\underbrace{0,\dots,0}_{\mathcal{S}_2},\underbrace{L,\dots,L}_{\mathcal{S}_3},\underbrace{0,\dots,0}_{\mathcal{S}_4})
    \end{align}
    Then, it is easy to verify that $X' = \frac{t_2 - t}{t_2 - t_1} Y + \frac{t-t_1}{t_2-t_1}Z$ as $K(t_2-t)/(t_2-t_1) + L(t-t_1)/(t_2-t_1) = (t-t_1)/(t_2-t_1)$.

    \textbf{For Case 2:} When $a_2$ \& $a_3$ share the interval $[I_1+1,I_2]$, all the vectors such that $|X| = t$ for $t \in [I_1+1,I_2]$ cannot be assigned to one of the anchors as this will reduce to \textbf{Case 1}. Therefore, $a_2$ \& $a_3$ must share $|X| = t$ for any $t \in [I_1+1,I_2]$.
    
    Let $X_1 \neq X_2$, and the $\mathcal{S}_i$ be the same as in \textbf{Case 1} except $|X_1| = |X_2| = I_1+1$. We can choose $X_1$ and $X_2$ closest to $a_2$ or $a_3$ respectively. In this case, $\mathcal{S}_2$ and $\mathcal{S}_3$ are not empty. We define $X_3$ for $X_1 \& X_2$ such that $|X_3| = I_2$ and $X_3$ has all $1$s for the indices in $\mathcal{S}_1$ and $\mathcal{S}_2$ except the entry in the first index for $\mathcal{S}_2$ is zero. Also, we keep the first entry in $\mathcal{S}_3$ to be zero as well and fill the rest of the indices arbitrarily. Hence, by construction, we have $I_2 < n-1$.
    \begin{align}
        X_3 = (\underbrace{1,\dots,1}_{\mathcal{S}_1},\underbrace{0,1,\dots,1}_{\mathcal{S}_2},\underbrace{0,1,0,\dots,1}_{\mathcal{S}_3 \cup \mathcal{S}_4})
    \end{align}

    Similarly, by reversing the roles of $0$s and $1$s, we can obtain $I_1 > 0$ condition. Therefore, the construction of $X_3$ is possible whenever either $I_1 > 0$ or $I_3 < n-1$. The only exception is when $f(X) = 1$ (or $0$) for $|X| \in \{0,n\}$ and $0$ (or $1$) otherwise.

    Notice that $X_3$ is constructed by using $\mathcal{S}$ depending both on $a_2$ and $a_3$. In general, $X_3$ can be assigned either to $a_2$ or $a_3$. Assume that it is assigned to $a_2$. Otherwise, change $X_1$ to $X_2$ in the following calaims: Let $Y = X_1 \land X_3$ and $Z = X_1 \lor X_3$. Clearly, $|Y| \leq I_1 - 1$ and $|Z| \geq I_2+1$ and they are both assigned to $a_1$. We finally claim that $Y/2 + Z/2 = X_1/2 + X_3/2$.
    
    In conclusion, both cases contradict Proposition \ref{prop:convex} and an interval-anchor assignment is necessary.
\end{proof}

\subsection{A Counterexample for the PARITY-based Extensions for NN Representations of Symmetric Boolean Functions}
\label{apx:counterexample}

Let $f(X)$ be the function given in Eq. \eqref{eq:counterexample}. For the sake of contradiction, we assume that there exist a PARITY-based extension to this symmetric function, namely, $a_{ij} = a_{ik}$ for any $i \in \{1,\dots,5\}$ and $j,k \in \{1,\dots,8\}$.

\begin{equation}
    \label{eq:counterexample}
    \begin{tabular}{c|c c c}
        $|X|$ & $f(X)$ \\
        \cline{1-2}
         0 & \textcolor{blue}{1} & \hphantom{a} & $I_{1} = 0$\\
         1 & \textcolor{red}{\textbf{0}} & \hphantom{a} & $I_{2} = 1$\\ 
         2 & \textcolor{blue}{1} & \hphantom{a} & \\
         3 & \textcolor{blue}{1} & \hphantom{a} & \\
         4 & \textcolor{blue}{1} & \hphantom{a} & \\
         5 & \textcolor{blue}{1} & \hphantom{a} & \\
         6 & \textcolor{blue}{1} & \hphantom{a} & $I_{3} = 6$\\
         7 & \textcolor{red}{\textbf{0}} & \hphantom{a} & $I_4 = 7$\\
         8 & \textcolor{blue}{1} & \hphantom{a} & $I_5 = 8$
    \end{tabular}
\end{equation}
We can rewrite the necessary and sufficient conditions given in Lemma \ref{lem:ns_cond_refined}.
We also know that $a_{ij} + a_{(i-1)j}$ (or $a_{ij} - a_{(i-1)j}$) has the same value for all $j \in \{1,\dots,8\}$ and $i \in \{2,\dots,5\}$. Then, we get
\begin{align}
    I_{i-1}\big(a_{i1} - a_{(i-1)1}\big) &< 4\big(a_{i1} - a_{(i-1)1}\big)\big(a_{i1} + a_{(i-1)1}\big)\nonumber \\
    &\hphantom{aaaa}< (I_{i-1}+1)\big(a_{i1} - a_{(i-1)1}\big) \\
    \frac{I_{i-1}}{4} &< \big( a_{i1} + a_{(i-1)1}\big) < \frac{I_{i-1}+1}{4}
\end{align}
Recall that Proposition \ref{prop:nec} still applies here. Therefore, $a_{1j} < a_{2j} < a_{3j} < a_{4j} < a_{5j}$ for all $j \in \{1,\dots,8\}$. More explicitly, we have the following system of inequalities
\begin{align}
    \label{eq:c_ex_nec}
    a_{11} &< a_{21} < a_{31} < a_{41} < a_{51} \\
    \label{eq:c_ex_1}
    0 &< a_{21} + a_{11} < \frac{1}{4} \\
    \label{eq:c_ex_2}
    \frac{1}{4} &< a_{31} + a_{21} < \frac{2}{4} \\
    \label{eq:c_ex_3}
    \frac{6}{4} &< a_{41} + a_{31} < \frac{7}{4} \\
    \label{eq:c_ex_4}
    \frac{7}{4} &< a_{51} + a_{41} < \frac{8}{4}
\end{align}
Firstly, we multiply Eq. \eqref{eq:c_ex_1} and \eqref{eq:c_ex_3} with $-1$ and sum all Eq. \eqref{eq:c_ex_1},\eqref{eq:c_ex_2},\eqref{eq:c_ex_3}, and \eqref{eq:c_ex_4}. Secondly, we multiply Eq. \eqref{eq:c_ex_2} with $-1$ and add it to Eq. \eqref{eq:c_ex_3}. Thus,
\begin{align}
    0 &< a_{51} - a_{11} < 1 \\
    1 &< a_{41} - a_{21} < 1.5
\end{align}
which is inconsistent with Eq. \eqref{eq:c_ex_nec}.

The following is a construction for the function in Eq. \eqref{eq:counterexample} using our techniques. We use $2$ significant digits to fit the matrix here.
\begin{equation}
\label{eq:new_B_construction}
\footnotesize\left[\begin{tabu}{cccccccc}
    \rowfont{\color{blue}}
  17.78 &  2.28 & -5.72 & -21.22 & -1.53 & -1.53 & -1.53 & -1.53\\
  \rowfont{\color{red}}
  19.28 &  3.28 & -4.72 & -20.22 & -0.53 & -0.53 & -0.53 & -0.53\\
  \rowfont{\color{blue}}
  20.28 &  4.78 & -3.72 & -19.22 &  0.47 &  0.47 &  0.47 &  0.47\\
  \rowfont{\color{red}}
  21.28 &  5.78 & -2.22 & -18.22 &  1.47 &  1.47 &  1.47 &  1.47\\
  \rowfont{\color{blue}}
  22.28 &  6.78 & -1.22 & -16.72 &  2.47 &  2.47 &  2.47 &  2.47
\end{tabu}\right]
\end{equation}

\subsection{Examples of NN Representations of Symmetric Boolean Functions without an Interval-Anchor Assignment}
\label{apx:ex_no_interval_anchor}
\begin{lemma}
    Let $f$ be an $n$-input symmetric Boolean function with $I(f) \leq 4$. Then, $NN(f) \geq I(f)$.
\end{lemma}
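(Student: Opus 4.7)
The plan is to dispose of $I(f) \leq 3$ using the classification $NN(f) = 2 \iff f$ is a non-constant linear threshold function (Section~\ref{sec:lt}), and then handle $I(f) = 4$ by contradiction, invoking Proposition~\ref{prop:convex} together with the averaging construction from the proof of Lemma~\ref{lem:3_interval}. For $I(f) = 1$ the function is constant and $NN(f) = 1$; for $I(f) = 2$, non-constancy forces $NN(f) \geq 2$; for $I(f) = 3$, $f$ has two transitions in $|X|$ and so cannot be a symmetric LTF (and hence, since $f$ is symmetric, cannot be an LTF at all), giving $NN(f) \geq 3$.

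For $I(f) = 4$ the same reasoning yields $NN(f) \geq 3$, so only $NN(f) = 3$ must be excluded. I would suppose toward contradiction that a $3$-anchor NN representation $\{a_1, a_2, a_3\}$ exists, and let $\mathcal{X}_i$ denote the set of binary vectors closest to $a_i$. Since the four alternating-label intervals must be served by only three anchors, one label is represented by a single anchor. After possibly replacing $f$ with $1-f$, I may assume WLOG that the unique anchor is positive, say $a_3$, so $\mathcal{X}_3 = \{X : f(X) = 1\}$ (the union of intervals $2$ and $4$), while $\mathcal{X}_1, \mathcal{X}_2$ partition the negative set (intervals $1$ and $3$).

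I then split into two sub-cases. In the \emph{mixed} sub-case, some $\mathcal{X}_i$, WLOG $\mathcal{X}_1$, contains binary vectors from both interval $1$ and interval $3$: say $X^{(1)} \in \mathcal{X}_1$ with $|X^{(1)}| = t_1 \leq I_1$ and $X^{(2)} \in \mathcal{X}_1$ with $|X^{(2)}| = t_2 \in [I_2+1, I_3]$. For any $t \in [I_1+1, I_2]$, the convex combination $X' = \frac{t_2-t}{t_2-t_1} X^{(1)} + \frac{t-t_1}{t_2-t_1} X^{(2)}$ has $|X'| = t$ and lies in $\mathrm{conv}(\mathcal{X}_1)$; the averaged vectors $Y, Z$ from the Lemma~\ref{lem:3_interval} Case~1 construction rewrite $X'$ as a convex combination of level-$t$ binary vectors, each of which is positive and hence in $\mathcal{X}_3$, so $X' \in \mathrm{conv}(\mathcal{X}_3)$ and this contradicts Proposition~\ref{prop:convex}. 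In the \emph{split} sub-case, $\mathcal{X}_1$ and $\mathcal{X}_2$ coincide, in some order, with the interval-$1$ and interval-$3$ vector sets; taking WLOG $\mathcal{X}_2$ to be the interval-$3$ set, the centroid identity (the average of all level-$k$ binary vectors is $(k/n,\ldots,k/n)$) places the diagonal segment $\{(c,\ldots,c) : c \in [(I_2+1)/n, I_3/n]\}$ inside $\mathrm{conv}(\mathcal{X}_2)$ and $\{(c,\ldots,c) : c \in [(I_1+1)/n, 1]\}$ inside $\mathrm{conv}(\mathcal{X}_3)$; these segments overlap, again contradicting Proposition~\ref{prop:convex}.

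The main obstacle will be the mixed sub-case, where the averaging must place $X'$ inside $\mathrm{conv}\{X : |X| = t\}$. This works for the same quantitative reason as in Lemma~\ref{lem:3_interval}: $|X^{(2)}| > |X^{(1)}|$ forces $|\mathcal{S}_3| \geq t_2 - t_1$ for the flip set $\mathcal{S}_3 = \{i : X^{(1)}_i = 0, X^{(2)}_i = 1\}$, and the choice $t \in (t_1, t_2)$ keeps the required differences $t - t_1$ and $t - |\mathcal{S}_1|$ within $|\mathcal{S}_3|$, so that the binomial coefficients defining $Y$ and $Z$ are well-defined.
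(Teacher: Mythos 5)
Your argument is correct, and for the main case $I(f)=4$ it rests on the same two tools as the paper's proof (the Case~1 averaging from Lemma~\ref{lem:3_interval} plus Proposition~\ref{prop:convex}), but it differs in two places worth noting. For $I(f)=3$ the paper simply applies Proposition~\ref{prop:nec} at the two boundaries to get $a_{2j}>a_{1j}$ and $a_{1j}>a_{2j}$ simultaneously; your detour through ``$NN(f)=2$ iff $f$ is a non-constant threshold function'' also works, but the parenthetical claim that a symmetric Boolean function which is a threshold function must be a \emph{symmetric} threshold function is doing real work and deserves one line (the realizing weight--threshold pairs form a convex set, so averaging the weights over all permutations yields equal weights). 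For $I(f)=4$ your mixed/split dichotomy is in fact \emph{more} careful than the paper's sketch: the paper asserts that both label assignments reduce to Case~1, which tacitly requires the intermediate level-$t$ vectors to be owned by a \emph{single} anchor of the opposite label --- Proposition~\ref{prop:convex} is pairwise and does not by itself exclude a point of $conv(\mathcal{X}_1)$ from lying in $conv(\mathcal{X}_2\cup\mathcal{X}_3)$. Your choice to run the averaging on one of the two same-label anchors (so that the in-between interval belongs entirely to the unique opposite-label anchor), together with the centroid identity $\frac{1}{\binom{n}{k}}\sum_{|X|=k}X=(k/n,\dots,k/n)$ for the split sub-case, is precisely what closes that loophole. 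One small repair: complementing $f$ only lets you assume the lone anchor is positive, not that the positive intervals are the second and fourth; either observe that the argument is symmetric in the interval order or compose with the input complementation $X\mapsto \mathds{1}-X$, which carries an NN representation with anchors $a_i$ to one with anchors $\mathds{1}-a_i$.
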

\begin{proof}
The cases when $I(f) \in \{1,2\}$ is trivial. When $I(f) = 3$, we can use Proposition \ref{prop:nec}. Now, assume that $3$-anchor NN representations exist for symmetric Boolean functions with $I(f) = 4$. Then, the assignment of anchors to the intervals can only belong to two cases: $a_1$ is assigned to the first and third intervals where $a_2$ \& $a_3$ shares the binary vectors for the second and fourth intervals, and vice versa. In both cases, there is a structure similar to \textbf{Case 1} of the proof of Lemma \ref{lem:3_interval}. By following similar steps, we can obtain a contradiction to Proposition \ref{prop:convex} and conclude that $3$-anchor NN representations do not exist for symmetric Boolean functions with $I(f) = 4$.
\end{proof}

Here is a 5-input counterexample for the function described in Lemma \ref{lem:3_interval} where the middle interval is shared by two positive anchors.

\begin{equation}
    \label{eq:no_int_anchor}
    \begin{tabular}{c|c}
        $|X|$ & $f(X)$ \\
        \cline{1-2}
         0 & \textcolor{red}{\textbf{0}} \\
         1 & \textcolor{blue}{1} \\
         2 & \textcolor{blue}{1} \\
         3 & \textcolor{blue}{1} \\
         4 & \textcolor{blue}{1} \\
         5 & \textcolor{red}{\textbf{0}}  
    \end{tabular}\hphantom{aa} A' =
    \begin{bmatrix}
        \textcolor{blue}{0} & \textcolor{blue}{0.57}  & \textcolor{blue}{0.57}  & \textcolor{blue}{0.57}  & \textcolor{blue}{0.57}  \\
        \textcolor{red}{\textbf{0.5}} & \textcolor{red}{\textbf{0.5}} & \textcolor{red}{\textbf{0.5}} & \textcolor{red}{\textbf{0.5}} & \textcolor{red}{\textbf{0.5}} \\
        \textcolor{blue}{1} & \textcolor{blue}{0.43}  & \textcolor{blue}{0.43}  & \textcolor{blue}{0.43}  & \textcolor{blue}{0.43}
    \end{bmatrix}
\end{equation}

\subsection{The Relationship Between NN Representations and Linear Threshold Circuits}
\label{apx:lt}

Another important aspect of NN representations is their place in the circuit class hierarchy. Suppose that we are given an $m$-anchor NN representation of a Boolean function. One can construct a linear threshold circuit of depth-3 with $O(m^2)$ many linear threshold gates \cite{hajnal2022nearest,hansen2015polynomial}. The converse, however, is not known; we cannot obtain an NN representation easily given a logic circuit.

We focus on the $\text{COMPARISON}$ (denoted by $\text{COMP}$) function to illustrate the size-resolution trade-off. $\text{COMP}(X,Y)$ is a linear threshold function $\text{COMP}(X,Y) = \mathds{1}\Big\{\sum_{i=1}^n 2^{i-1}x_i \geq \sum_{i=1}^n 2^{i-1}y_i\Big\}$ which computes whether an unsigned integer is greater than or equal to another. Here, $X$ and $Y$ corresponds to the binary expansions of these integers.

Let $\text{OR}$ denote Boolean OR function, $\text{AND}$ denote Boolean AND function, and $\text{THR}$ denote a linear threshold function. 
\begin{lemma}
\label{lem:circ_transformation}
Suppose that $f(X)$ is an $n$-input Boolean function with an $m$ anchor NN representation with the set of positive (or negative) anchors $P$ (or $N$). Then, there is a logic circuit that can compute $f(X)$ in the form $\textup{OR} \circ \textup{AND} \circ \textup{THR}$ of size $|P||N| + |P| + 1$ where $\circ$ denotes function composition.
\end{lemma}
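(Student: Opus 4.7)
The plan is to translate the definition of an NN representation directly into a three-layer circuit, exploiting the fact that comparing squared Euclidean distances is a linear threshold operation. Recall that $f(X) = 1$ iff there exists a positive anchor $p \in P$ such that $d(X,p) < d(X,n)$ for every negative anchor $n \in N$ (and symmetrically for $f(X) = 0$, but this falls out of taking the complement of the positive case on $\{0,1\}^n$). This is literally an OR over $p$ of an AND over $n$ of the predicate ``$X$ is closer to $p$ than to $n$,'' so the circuit structure is forced; what remains is to verify that the innermost predicate is a linear threshold function and to count gates.

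First I would expand, for each pair $(p,n) \in P \times N$, the squared-distance inequality
\begin{equation*}
    d(X,p)^2 - d(X,n)^2 = \|p\|_2^2 - \|n\|_2^2 - 2\sum_{j=1}^{n} (p_j - n_j)x_j < 0,
\end{equation*}
which is equivalent to $d(X,p) < d(X,n)$ because both distances are nonnegative. Rearranging yields the linear threshold condition $\sum_{j=1}^{n}(p_j - n_j)x_j > \tfrac{1}{2}(\|p\|_2^2 - \|n\|_2^2)$. Hence there is a linear threshold gate $T_{p,n}(X)$ that outputs $1$ precisely when $X$ is strictly closer to $p$ than to $n$. This gives the bottom layer and accounts for $|P|\cdot|N|$ threshold gates in total.

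Next I would assemble the middle layer: for each positive anchor $p \in P$, introduce an AND gate
\begin{equation*}
    A_p(X) \;=\; \bigwedge_{n \in N} T_{p,n}(X),
\end{equation*}
which evaluates to $1$ iff $p$ is the (strict) nearest anchor to $X$ among $\{p\} \cup N$. This adds $|P|$ AND gates. Finally, the top layer is a single OR gate
\begin{equation*}
    f(X) \;=\; \bigvee_{p \in P} A_p(X),
\end{equation*}
adding $1$ more gate, for a total of $|P||N| + |P| + 1$ gates arranged as $\mathrm{OR}\circ\mathrm{AND}\circ\mathrm{THR}$.

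The only thing to check carefully is correctness of the OR-of-AND expression against the NN definition. If $f(X)=1$, the definition supplies a $p \in P$ with $d(X,p) < d(X,n)$ for all $n \in N$, so $T_{p,n}(X)=1$ for every $n$, hence $A_p(X)=1$ and the OR fires. Conversely, if some $A_p(X)=1$, then $X$ is strictly closer to this $p$ than to any $n \in N$; by the NN representation property this forces $f(X)=1$ (any $X$ with $f(X)=0$ must be strictly closer to some negative anchor than to every positive anchor, which would violate $A_p(X)=1$). I do not foresee a real obstacle here; the subtle point is simply to ensure the strict inequality in the definition lines up with the strict inequality realized by the threshold gates, which it does because both comparisons are over squared distances and the anchor-point sets $P$ and $N$ are disjoint.
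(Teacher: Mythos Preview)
Your proposal is correct and is essentially the same construction as the paper's: both expand the squared-distance comparison into a linear threshold gate for each positive/negative anchor pair, AND over the negative anchors for each fixed positive anchor, and OR over the positive anchors, with the same gate count $|P||N|+|P|+1$. The only cosmetic differences are notation (you reuse $n$ for a negative anchor, which is mildly confusing) and the strict-versus-nonstrict threshold convention, which is immaterial here.
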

\begin{proof}
    The label of the nearest neighbor can be found by the following formula.
    \begin{align}
        \label{eq:dist_formula}
        \arg\min_{i} d(a_i,X)^2 &= \arg\min_{i} |X| - 2(AX)_i + ||a_i||_2^2 \\
        &= \arg\max_{i} 2(AX)_i - ||a_i||_2^2
    \end{align}
    where $||.||_2$ denotes the Euclidean norm. Let $P$ (or $N$) be the set of positive (or negative) anchors $\{a_1,\dots,a_{|P|}\}$ (or, $\{b_1,\dots,b_{|N|}\}$.
    
    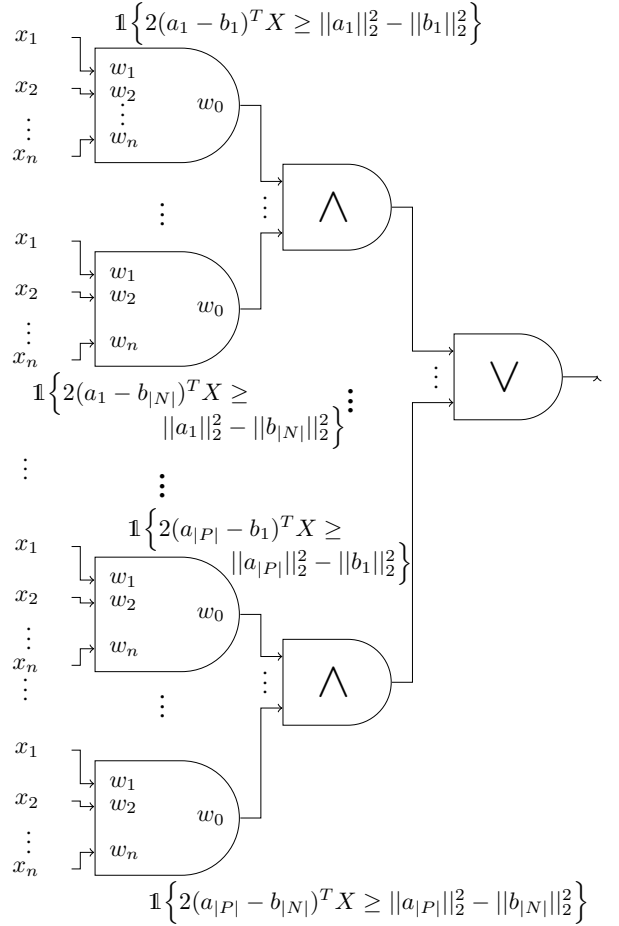
\begin{figure}[h]
    \centering
    \scalebox{0.9}{\begin{tikzpicture}
    \tikzstyle{sum} = [gate=white,label=center:+]
    \tikzstyle{input} = [circle]
    \newcommand{\inputnum}{9}
    \newcommand{\ltnum}{3}
    \newcommand{\andnum}{3}
    \newcommand{\eq}{=}
    
    \def\colors{red,violet,blue}
    \node[input,label=180:$x_1$] (i-1) at (0,0) {};
    \node[input,label=180:$x_2$] (i-2) at (0,-0.75) {};
    \node at (-0.45,-1.25) {$\vdots$};
    \node[input,label=180:$x_n$] (i-3) at (0,-1.75) {};
    \node[input,label=0:$\mathds{1}\Big\{2(a_1-b_1)^T X \geq ||a_1||_2^2  - ||b_1||_2^2 \Big\}$] (EQ-12-1) at (0.5,0.2) {};
    
    \node[and gate US, draw, logic gate inputs=nnnn, scale=2] (TH1) at (1.5,-1) {};
        \draw (i-1) -| +(0.3,0) |- (TH1.input 1);
        \draw (i-2) -| +(0.3,0) |- (TH1.input 2);
        \draw (i-3) -| +(0.3,0) |- (TH1.input 4);
        
    \node [above right=-0.25 and 0.25 of TH1.input 3] {$\vdots$};
    \node [right=0.1 of TH1.input 1] {$w_1$};
    \node [right=0.1 of TH1.input 2] {$w_2$};
    \node [right=0.1 of TH1.input 4] {$w_n$};
    \node [left =0.1 of TH1.output]{$w_0$};

    \node at (1.5,-2.5) {\large$\vdots$};
    
    \node[input,label=180:$x_1$] (i-4) at (0,-3) {};
    \node[input,label=180:$x_2$] (i-5) at (0,-3.75) {};
    \node at (-0.45,-4.25) {$\vdots$};
    \node[input,label=180:$x_n$] (i-6) at (0,-4.75) {};
    
    \node[input,label=0:$\mathds{1}\Big\{2(a_{1}-b_{|N|})^T X \geq$] (EQ-1n-1) at (-0.7,-5.25) {};
    \node[input,label=0:$||a_1||_2^2  - ||b_{|N|}||_2^2 \Big\}$] (EQ-1n-2) at (1.2,-5.75) {};
    
    \node[and gate US, draw, logic gate inputs=nnnn, scale=2] (TH2) at (1.5,-4) {};
        \draw (i-4) -| +(0.3,0) |- (TH2.input 1);
        \draw (i-5) -| +(0.3,0) |- (TH2.input 2);
        \draw (i-6) -| +(0.3,0) |- (TH2.input 4);
    \node [right=0.1 of TH2.input 1] {$w_1$};
    \node [right=0.1 of TH2.input 2] {$w_2$};
    \node [right=0.1 of TH2.input 4] {$w_n$};
    \node [left =0.1 of TH2.output ]{$w_0$};

    \node at (1.5,-6.5) {\LARGE$\vdots$};
    
    \node at (-0.5,-6.25) {$\vdots$};
    
    \node[input,label=180:$x_1$] (i-7) at (0,-7.5) {};
    \node[input,label=180:$x_2$] (i-8) at (0,-8.25) {};
    \node at (-0.45,-8.75) {$\vdots$};
    \node[input,label=180:$x_n$] (i-9) at (0,-9.25) {};
    \node[input,label=0:$\mathds{1}\Big\{2(a_{|P|}-b_{1})^T X \geq$] (EQ-P1-1) at (0.7,-7.25) {};
    \node[input,label=0:$||a_{|P|}||_2^2  - ||b_{1}||_2^2 \Big\}$] (EQ-P2-2) at (2.2,-7.75) {};
    \node[and gate US, draw, logic gate inputs=nnnn, scale=2] (TH3) at (1.5,-8.5) {};
        \draw (i-7) -| +(0.3,0) |- (TH3.input 1);
        \draw (i-8) -| +(0.3,0) |- (TH3.input 2);
        \draw (i-9) -| +(0.3,0) |- (TH3.input 4);
    \node [right=0.1 of TH3.input 1] {$w_1$};
    \node [right=0.1 of TH3.input 2] {$w_2$};
    \node [right=0.1 of TH3.input 4] {$w_n$};
    \node [left =0.1 of TH3.output ] {$w_0$};

    \node at (1.5,-9.75) {\large$\vdots$};
    
    \node at (-0.5,-9.5) {$\vdots$};
    \node[input,label=180:$x_1$] (i-7) at (0,-10.5) {};
    \node[input,label=180:$x_2$] (i-8) at (0,-11.25) {};
    \node at (-0.45,-11.75) {$\vdots$};
    \node[input,label=180:$x_n$] (i-9) at (0,-12.25) {};
    \node[input,label=0:$\mathds{1}\Big\{2(a_{|P|}-b_{|N|})^T X \geq ||a_{|P|}||_2^2  - ||b_{|N|}||_2^2 \Big\}$] (EQ-PN-1) at (1,-12.75) {};
    \node[and gate US, draw, logic gate inputs=nnnn, scale=2] (TH4) at (1.5,-11.5) {};
        \draw (i-7) -| +(0.3,0) |- (TH4.input 1);
        \draw (i-8) -| +(0.3,0) |- (TH4.input 2);
        \draw (i-9) -| +(0.3,0) |- (TH4.input 4);
    \node [right=0.1 of TH4.input 1] {$w_1$};
    \node [right=0.1 of TH4.input 2] {$w_2$};
    \node [right=0.1 of TH4.input 4] {$w_n$};
    \node [left =0.1 of TH4.output ] {$w_0$};
    \newcommand{\offset}{2.5}
    \node[and gate US, draw, logic gate inputs=nnnn, scale=1.5,label=center:\huge$\land$] (AND1) at (\offset+1.5,-2.5) {};
        \draw (TH1.output) -| +(0.3,0) |- (AND1.input 1);
        \node at (\offset+0.5,-2.4) {$\vdots$};
        \draw (TH2.output) -| +(0.3,0) |- (AND1.input 4);
        
    \node at (4.25,-5.25) {\LARGE$\vdots$};
    
    \node[and gate US, draw, logic gate inputs=nnnn, scale=1.5,label=center:\huge$\land$] (AND2) at (\offset+1.5,-9.5) {};
        \draw (TH3.output) -| +(0.3,0) |- (AND2.input 1);
        \node at (\offset+0.5,-9.4) {$\vdots$};
        \draw (TH4.output) -| +(0.3,0) |- (AND2.input 4);
    \renewcommand{\offset}{5}
    \node[and gate US, draw, logic gate inputs=nnnn, scale=1.5,label=center:\huge$\lor$] (OR) at (\offset+1.5,-5) {};
        \draw (AND1.output) -| +(0.3,0) |- (OR.input 1);
        \node at (\offset+0.5,-4.9) {$\vdots$};
        \draw (AND2.output) -| +(0.3,0) |- (OR.input 4);
        \draw (OR.output) -| ([xshift=0.5cm]OR.output);
        
    \end{tikzpicture}}
    \caption{A sketch of a transformation from an NN representation to a linear threshold circuit of depth 3. The first layer consists of linear threshold gates while the second and third layer is an $\text{AND}-\text{OR}$ network. The gates with $w_i$ inside with $w_0$ as the bias is $\text{THR}(X)$, the gates with $\land$ inside is $\text{AND}(X)$, and the gates with $\lor$ inside is $\text{OR}(X)$}
    \label{fig:lt_circuit_transformation}
    \end{figure}
    
    The constructive transformation is given in Fig. \ref{fig:lt_circuit_transformation}. In the first layer, we simply compare the distances of all individual positive anchors, say $a_i$, to all negative anchors $\{b_1,\dots,b_{|N|}\}$ using Eq. \eqref{eq:dist_formula}.  Hence, there are $|P||N|$ many threshold gates in the first layer. There is an AND gate corresponding to each positive anchor and the top gate is an OR gate, implying that the claimed circuit size is correct.
    
    Assume first that $f(X) = 1$. The linear threshold function
    \begin{equation}
        \label{eq:nn_to_lt}
        \mathds{1}\Big\{2a_{i}^T X - ||a_i||_2^2 \geq 2b_{j}^T X - ||b_j||_2^2 \Big\} 
    \end{equation}
    evaluates $1$ for some $i \in \{1,\dots,|P|\}$ and for all $j \in \{1,\dots,|N|\}$ because a positive anchor must be closer to $X$ than any negative anchor. Hence, the output of the corresponding $i^{th}$ AND gate will be $1$ and consequently, the output of the circuit is $1$.

    Conversely, assume $f(X) = 0$. Then, there always exist some $j$ such that Eq. \eqref{eq:nn_to_lt} is $0$ for all $i$ because the nearest neighbor to $X$ is negative (i.e. in $N$). Therefore, all of the second layer AND gates compute 0 so that the output of circuit is 0.
\end{proof}

To compute $\text{COMP}$, there is a lower bound on the number of threshold gates independent of the circuit depth given that there is a weight size constraint. Based on the lower bound and Lemma \ref{lem:circ_transformation}, we can obtain a similar bound on the NN complexity.

\begin{theorem}[\cite{kilic2021neural,roychowdhury1994lower}]
\label{th:comp_lower_bound}
Suppose that $X = (x_1,\dots,x_n)$ and $Y = (y_1,\dots,y_n)$ are binary vectors. The $\text{COMP}(X,Y)$ function can be computed by a linear threshold circuit of size $\Omega(n/\log{nW})$ where $W \in \mathbb{Z}$ is the maximum weight size.
\end{theorem}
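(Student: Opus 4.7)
The plan is to prove Theorem \ref{th:comp_lower_bound} by combining (i) a classical two-party communication complexity lower bound on $\text{COMP}$ with (ii) a simulation of bounded-weight linear threshold circuits by cheap communication protocols. Throughout, split the $2n$ input bits so that Alice holds $X$ and Bob holds $Y$, and let them try to decide $\text{COMP}(X,Y)$.

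For the problem lower bound, I would show $D(\text{COMP}) = \Omega(n)$. The equality function $\text{EQ}(x,y) = \mathds{1}\{x=y\}$ has deterministic communication complexity at least $n+1$ by the standard fooling-set argument: any monochromatic rectangle can contain at most one diagonal pair $(x,x)$, forcing $2^n$ monochromatic rectangles and hence at least $n$ bits of communication. Since $\text{EQ}(x,y) = \text{COMP}(x,y) \wedge \text{COMP}(y,x)$, a $c$-bit protocol for $\text{COMP}$ yields a $(2c+O(1))$-bit protocol for $\text{EQ}$, and therefore $D(\text{COMP}) \geq (n+1)/2 - O(1) = \Omega(n)$.

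For the simulation, suppose the circuit has $s$ linear threshold gates $g_1,\dots,g_s$ in topological order, each of the form $\mathds{1}\{\sum_i w_i z_i \geq \theta\}$ with $|w_i|,|\theta| \leq W$ and the $z_i$'s drawn from $X$, $Y$, and previously computed outputs $g_1,\dots,g_{k-1}$. The players process the gates in order: for $g_k$, Alice evaluates the partial weighted sum over the coordinates of $X$ plus the previously computed outputs that she already knows, obtaining an integer of magnitude at most $(n+s)W$; she transmits it in $O(\log((n+s)W))$ bits. Bob adds his own partial sum, decides $g_k$'s value, and returns it to Alice with a single bit. After $s$ rounds both players know the value of the output gate, so the total communication is $O(s \log(snW))$ bits, which under the harmless assumption $s = \mathrm{poly}(n)$ simplifies to $O(s \log(nW))$.

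Combining the two estimates gives $s \log(nW) = \Omega(n)$, i.e.\ $s = \Omega(n/\log(nW))$, as claimed. The subtle point I expect to be the main obstacle is bookkeeping for the fact that intermediate gates may feed into later gates; the remedy is Bob's single-bit reply after each gate, which promotes every intermediate output to common knowledge and keeps the per-gate cost at $O(\log(nW))$ regardless of the circuit depth. Everything else is a routine packaging of the textbook communication-complexity technique for lower-bounding threshold circuit size with bounded weights.
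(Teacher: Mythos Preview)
The paper does not prove this theorem; it is quoted from \cite{kilic2021neural,roychowdhury1994lower} and used as a black box to derive the subsequent lemma on $\textup{COMP}$. So there is no ``paper's own proof'' to compare against.

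That said, your argument is the standard communication-complexity proof behind these references and is essentially correct. Two small clean-ups: (i) you can get $D(\text{COMP})\geq n+1$ directly from the diagonal fooling set $\{(x,x):x\in\{0,1\}^n\}$, since for $x\neq y$ one of $\text{COMP}(x,y),\text{COMP}(y,x)$ equals $0$, avoiding the detour through $\text{EQ}$; and (ii) the ``harmless assumption $s=\mathrm{poly}(n)$'' is unnecessary---if $s>n$ the bound $s=\Omega(n/\log(nW))$ is immediate, while if $s\leq n$ then $\log(snW)=O(\log(nW))$ and your estimate goes through verbatim. The bookkeeping you flag (Bob echoing each gate's bit so that intermediate outputs become common knowledge) is exactly the right fix and keeps the per-gate cost at $O(\log(nW))$ independent of depth.
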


\begin{lemma}
The number of anchors for an NN representation of an $n$-input $\textup{COMP}(X,Y)$ is $\Omega\Big(\sqrt{\frac{n}{\log{n}+r}}\Big)$ where $r$ is the resolution of the representation.
\end{lemma}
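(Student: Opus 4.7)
The plan is to convert any $m$-anchor NN representation of $\text{COMP}$ into a depth-$3$ linear threshold circuit via Lemma \ref{lem:circ_transformation}, and then apply the circuit-complexity lower bound of Theorem \ref{th:comp_lower_bound} for $\text{COMP}$, with careful bookkeeping of weights in terms of the resolution $r$.

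First I would invoke Lemma \ref{lem:circ_transformation} with the given NN representation of $\text{COMP}(X,Y)$ on $2n$ Boolean inputs. Writing $m = |P| + |N|$, this yields a circuit of the form $\text{OR} \circ \text{AND} \circ \text{THR}$ of total size at most $|P||N| + |P| + 1 = O(m^2)$ that computes $\text{COMP}$. The interesting gates are the first-layer threshold gates of the form $\mathds{1}\{2(a_i - b_j)^T X \geq \|a_i\|_2^2 - \|b_j\|_2^2\}$; the AND and OR gates in layers $2$ and $3$ have constant weights and are absorbed in the analysis.

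Second, I would bound the integer weight size $W$ of this circuit in terms of $r$. By definition of resolution, every entry $a_{ij}$ is a rational $p/q$ with $|p|,|q| \leq 2^r$, so $|a_{ij}| \leq 2^r$ and $|a_{ij}|^2 \leq 2^{2r}$. Consequently each first-layer coefficient $2(a_{ik} - b_{jk})$ has magnitude $O(2^r)$ and the threshold term $\|a_i\|_2^2 - \|b_j\|_2^2$, being a sum of $2n$ squared rational entries, has magnitude $O(n \cdot 2^{2r})$. After a uniform rescaling of the inequality by the common denominator, one obtains integer weights and threshold whose bit-complexity is $O(\log n + r)$, i.e., $W = 2^{O(\log n + r)}$.

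Third, I would apply Theorem \ref{th:comp_lower_bound} to the circuit produced in Step 1. The theorem asserts that any linear threshold circuit computing $\text{COMP}(X,Y)$ on $2n$ inputs with maximum weight $W$ has size $\Omega(n / \log(nW))$. Substituting $\log(nW) = O(\log n + r)$ yields a circuit-size lower bound of $\Omega(n/(\log n + r))$. Combined with the upper bound $O(m^2)$ on the circuit size from Step 1, we obtain $m^2 = \Omega(n/(\log n + r))$, and therefore $m = \Omega\bigl(\sqrt{n/(\log n + r)}\bigr)$, as claimed.

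The main obstacle is Step 2, i.e., arguing that the integer weight magnitude is really $2^{O(\log n + r)}$ rather than the pessimistic $2^{O(rn)}$ that arises from naively clearing $n$ different denominators by LCM. The key is that Theorem \ref{th:comp_lower_bound} is driven by the effective magnitude of the (rational) linear form rather than by the worst-case common denominator, so one should track the bit-complexity of the coefficients and the threshold directly; the coefficients stay $O(2^r)$ and the sum-of-squares threshold picks up only a $\log n$ overhead from having $n$ summands. Making this accounting rigorous — in particular, matching the statement of Theorem \ref{th:comp_lower_bound} to the rational-weighted gates we produce, and appealing to standard weight-normalization arguments for threshold functions — is the only nontrivial step; the rest is a direct composition of the previous lemma with the circuit lower bound.
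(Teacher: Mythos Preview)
Your proposal is correct and follows essentially the same approach as the paper: convert the $m$-anchor NN representation to a depth-$3$ threshold circuit of size $O(m^2)$ via Lemma~\ref{lem:circ_transformation}, then invoke Theorem~\ref{th:comp_lower_bound} with $\log(nW)=O(\log n + r)$ to force $m^2=\Omega\bigl(n/(\log n + r)\bigr)$. The paper's own argument is in fact terser than yours---it simply writes $r=\lceil\log_2(W+1)\rceil$ and appeals to the contradiction, without the weight-normalization bookkeeping you (rightly) flag as the only nontrivial step.
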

\begin{proof}
    Suppose that there is a NN representation of the $\textup{COMP}(X,Y)$ such that the number of anchors, say $m$, is $o\Big(\sqrt{\frac{n}{\log{n}+r}}\Big)$. Then, by Lemma \ref{lem:circ_transformation}, there is a depth-3 threshold circuit of size $o\Big(\frac{n}{\log{n}+r}\Big)$ computing $\text{COMP}(X,Y)$. Since $r = \lceil\log_2{W+1}\rceil$, this contradicts the lower bound for $\textup{COMP}(X,Y)$ given in Theorem \ref{th:comp_lower_bound}.
\end{proof}

\end{document}